\newcommand{\rd}{{\mathrm d}}
\newcommand{\rT}{{\mathrm{T}}}
\newcommand{\E}{{\mathbb{E}}}
\newcommand{\vx}{{\bf x}}
\newcommand{\vp}{{\bf p}}
\newcommand{\vq}{{\bf q}}
\newcommand{\vu}{{\bf u}}
\newcommand{\vv}{{\bf  v}}
\newcommand{\vw}{{\bf w}}
\newcommand{\vF}{{\bf F}}
\newcommand{\valpha}{{\mbox{\boldmath$\alpha$}}}
\newcommand{\vtheta}{{\mbox{\boldmath$\theta$}}}
\newcommand{\argmin}{\operatornamewithlimits{argmin}}
\newcommand{\Qb}{\mathbb{Q}}
\newcommand{\Pb}{\mathbb{P}}
\newcommand{\Rb}{\mathbb{R}}
\newcommand{\KL}[2]{\mathbb{KL}\left({#1}\parallel {#2}\right)}
\newcommand{\ExP}[2]{\E_{{#1}}{\left[#2\right]}}
\newcommand{\pluseq}{\mathrel{+}=}
\newtheorem{lemma}{Lemma}
\definecolor{Gray}{gray}{0.65}
\newcommand{\remove}[1]{\textcolor{red}{\sout{}}}
\title{Robust Model Predictive Path Integral Control: Analysis and Performance Guarantees}
\author{Manan S. Gandhi$^{1}$, Bogdan Vlahov$^{1}$, Jason Gibson$^{1}$, Grady Williams$^{1,2}$, Evangelos A. Theodorou$^{1}$%
\thanks{Manuscript received: October, 16, 2020; December, 28, 2020; Accepted January, 18, 2021.}%Use only for final RAL version
\thanks{This paper was recommended for publication by Editor Lucia Pallottino upon evaluation of the Associate Editor and Reviewers' comments.
This work was supported by Sandia National Laboratories, a multimission laboratory managed and operated by
National Technology and Engineering Solutions of Sandia, LLC., a wholly owned subsidiary of Honeywell International, Inc., for the U.S. Department of Energy’s National Nuclear Security Administration under contract DE-NA-0003525. Additionally, this work is supported by NASA LaRC.} %Use only for final RAL version
\thanks{$^{1}$The Authors are with the Autonomous Control and Decision Systems Lab, Georgia Institute of Technology, Atlanta GA 30313 USA (\textit{Corresponding author: Manan Gandhi)} {\tt\footnotesize mgandhi3@gatech.edu; bvlahov3@gatech.edu; jgibson37@gatech.edu; evangelos.theodorou@gatech.edu})}%
\thanks{$^{2} $The Fourth Author is also with Embark Trucks, San Francisco CA 94107 {\tt\footnotesize Grady@embarktrucks.com} }%
\thanks{This paper has a supplementary video available to view at {\tt\footnotesize https://youtu.be/wQa3adL3AXU}}
\thanks{Digital Object Identifier (DOI): see top of this page.}
}
\begin{document}
\maketitle

%===============================================================================

\begin{abstract}

In this paper we propose a novel decision making architecture for \ac{RMPPI} and investigate its performance guarantees and applicability to off-road navigation. Key building blocks of the proposed architecture are an augmented state space representation of the system consisting of nominal and actual dynamics, a placeholder for different types of tracking controllers,  a safety logic for nominal state propagation, and an importance sampling scheme that takes into account the capabilities of the underlying tracking control.  Using these ingredients, we derive a bound on the free energy growth of the dynamical system which is a function of task constraint satisfaction level, the performance of the underlying tracking controller, and the sampling error of the stochastic optimization used within \ac{RMPPI}.  To validate the bound on free energy growth, we perform experiments in simulation using two types of tracking controllers, namely the iterative Linear Quadratic Gaussian and Contraction-Metric based control. We further demonstrate  the applicability of \ac{RMPPI} in real hardware using the GT AutoRally vehicle. Our experiments demonstrate that  \ac{RMPPI} outperforms MPPI and Tube-MPPI by alleviating  issues of the aforementioned model predictive controllers related to either lack of robustness or excessive conservatism. \ac{RMPPI} provides the best of the two worlds in terms of agility and robustness to disturbances. 
\end{abstract}

% EMPIRICAL ABSTRACT:  In this work, we investigate the performance bounds and applications of \ac{MPPI}, \ac{Tube-MPPI} \ac{RMPPI}. The cornerstones of this family of controllers are the choice of importance sampler, and the selection of the nominal state. We empirically demonstrate the superior disturbance rejection of \ac{RMPPI} in simulation and in real-life experiments. We empirically validate the bounds on free energy growth on the AutoRally hardware. Finally, \ac{RMPPI}'s theoretical performance guarantees on the free energy growth rely on an assumption of exponential tracking with the underlying feedback controller. We employ a novel \ac{CCM} controller which satisfies this assumption to show that the free energy bound indicates how close the system is to violating crash constraints. Ultimately we show clear distinctions between the family of Model Predictive Path Integral controllers on a real, off-road, autonomous driving task through the changes in free energy.

% Two or three meaningful keywords should be added here

\begin{IEEEkeywords}
Optimization and Optimal Control, Field Robots, Model Predictive Control, Stochastic Optimization
\end{IEEEkeywords}
\IEEEpeerreviewmaketitle

%===============================================================================
\section{Introduction}
\acresetall{}
\IEEEPARstart{R}{eal} world problems that require aggressive control solutions in uncertain environments require fast trajectory planning, control, and tracking. Some examples include off-road ground vehicle racing~\cite{williams2018best}, aerial acrobatics~\cite{abbeel2010autonomous}, and drone racing~\cite{foehn2020alphapilot}. Historically, sampling-based \ac{MPC}, such as \ac{MPPI} can be a solution to these types of tasks~\cite{williams2017information}. However, the performance of the controller is often hinged on the control sequence around which we are sampling, also known as the \ac{IS} trajectory. A severe disturbance can change the initial state of the system into a region where the importance sampler is no longer valid. \ac{Tube-MPPI}~\cite{williams2018robust} offers a potential solution by incorporating a nominal state solution that is immune from catastrophic disturbances, however there is no theoretical guarantee of performance and in practice, the nominal state can diverge. \ac{RMPPI} explores the concept of robustness from an information theoretic perspective through \ac{IS} and nominal state propagation.
\remove{We provide a review of the performance of this family of Model Predictive Path Integral control methods through both simulation and real experiments to demonstrate disturbance handling, failure causes, verification of the free energy bound. Finally we employ a novel scheme inside the \ac{IS} to draw connections between the tracking controller and free energy growth bounds.}
The contributions of our work are summarized below.

\noindent \textbf{1.)} We derive and empirically validate the bound on free energy growth for \ac{RMPPI}. This free energy bound measures the ``dynamic limit'' of the system, i.e. how close the system is operating with respect to task failure. We show this performance guarantee both in simulation and in reality on the AutoRally platform.

\noindent \textbf{2.)} We present a novel extension of \ac{IS} that utilizes \ac{CCM} to validate the exponential tracking guarantee. In simulation, we demonstrate how the performance of a nonlinear tracking controller impacts the free energy of the system. 

\noindent \textbf{3.)} We provide a systematic comparison of MPPI, Tube-MPPI, and Robust-MPPI, highlighting how each algorithm handles disturbances. The failure cases of each algorithm is explored in detail.

The rest of the paper is organized as follows: \cref{sec:relatedworks} reviews related work in Robust MPC, and \cref{sec:mathbackground} provides a background on information theoretic sampling based control, and motivates the need for \ac{RMPPI}. \cref{sec:rmppiderive} provides theoretical results related to the derivation of RMPPI, and proves the numerical bound on the free energy growth. \cref{sec:results} discusses the experimental results in both simulation and hardware.
\section{Related Works}
\label{sec:relatedworks}
Robust \ac{MPC} methods historically focus on linear MPC methods, \cite{richards2005robust, oravec2015alternative} and may not generalize well to nonlinear systems. Some extensions of MPC methods utilize an ancillary controller to guarantee robustness to external disturbances~\cite{mayne2011robust}, but determining the tube size is difficult. Other non-MPC methods can leverage alternative controllers and adaptation to maintain robustness~\cite{lakshmanan2020safe}, but these controllers are typically conservative. The trade-off between robustness and fast tracking relies on tuning the adaptation rate and filter frequency, but ultimately these controllers are not predictive, and thus may be ill suited for aggressive control problems. Sampling based motion planners, such as \cite{luders2013robust}, even for off-road driving tasks \cite{lee2016robust}, generate motion plans that ignore nonlinearities in the dynamics, and rely on a tracking controller to execute the motion plan. Other methods utilize a library of motion primitives \cite{sakcak2019sampling}, but again there is a tradeoff between dynamic feasibility, pre-computed maneuvers, and computational resources. Sampling based control schemes lean on \ac{IS}~\cite{zhang2014applications} to maintain tractability of the control problem, but are victim to both stochastic disturbances and errors in the dynamics model. The evolution of Path Integral Control has been towards increased performance when dealing with unmodeled disturbances and hazardous environments \cite{williams2018information, arruda2017uncertainty}. 
%%%%%%%%%%%%%%%%%%%%%%%%%%%%%%%%%%%%%%%%%%%%%%%%%%%%
%%%%%%%%%%%%%%%%%%%%%%%%%%%%%%%%%%%%%%%%%%%%%%%%%%%%
\section{Mathematical Background}
\label{sec:mathbackground}
Consider a general nonlinear system with discrete dynamics:
\begin{align}
    \vx_{t+1} &= \vx_t + \vF(\vx_t,\vu_t+\epsilon_t)\Delta t+\vw_t \label{eq:dyn} ,\\
        \mathcal{L}(\vx,\vu) &= q(\vx) + \lambda\vu^{\rT} \Sigma^{-1} \vu \label{eq:cost_function} ,\\
    S(U, \vx_0) &= \phi(\vx_T) + \sum_{t = 0}^{T-1} q(\vx_t) \label{eq:pathcost},
\end{align}
where $\vx\in\Rb^{n_x}$ is the state, $\vu\in\Rb^{n_u}$ is the control, $\epsilon \in\mathcal{N}(0,\Sigma)$ is the noise in the control channels, and $\vw\in\Rb^{n_x}$ is an external disturbance. State disturbances $\vw$ are unknown but assumed to be bounded. The function $\mathcal{L}(\vx, \vu)$ is the running cost function, with state cost $q(\vx)$ and quadratic control cost parameterized by inverse temperature $\lambda > 0$ and positive definite control cost penalty matrix $\Sigma^{-1}$. The function $S(U, \vx_0)$ is the state-to-path cost function, which takes in an initial condition $\vx_0$, set of inputs $U := \{\vu_0, ..., \vu_{T-1}\}$ and maps them to a cost value. $\phi(\vx)$ is the terminal cost. Next we define the free energy as follows:
\begin{equation}
\mathcal{F}(S, \Pb, \vx_0, \lambda) = -\lambda \log\ExP{\Pb}{\exp\left(-\frac{1}{\lambda}S(V, \vx_0) \right)} \label{eq:free_energy} ,
\end{equation}
where $V := \{\vu_0 + \epsilon_0, ..., \vu_{T-1}+\epsilon_{T-1}\}$ is the set of perturbed inputs. The expectation is taken with respect to $\Pb$, the probability distribution from which the controls of the system are sampled. In this case, the mean of the distribution is 0, representing the ``uncontrolled'' dynamics. Free energy is a metric for control performance in MPPI. This expectation can be estimated via \ac{IS} \cite{doucet2001introduction}. 

\noindent \textbf{Information-Theoretic MPPI:} \ac{MPPI} can be derived from the relation between the free energy and the relative entropy of two control systems. The two systems have their controls sampled from two distinct control distributions. One distribution is the ``controlled'' system $\Qb$ defined by density function: %\textcolor{red}{The forms of the density functions $\vp(V)$ and $\vq(V)$ are in \cref{eq:density_func_uncontrolled,eq:density_func_controlled}:}
\begin{align}
    %\vp(V) &= Z^{-1} \prod_{t=0}^{T-1}\exp \left(-\frac{1}{2} (\vv_t - 0)^T \Sigma^{-1} (\vv_t - 0) \right) \label{eq:density_func_uncontrolled}, \\
    \vq(V) &= Z^{-1} \prod_{t=0}^{T-1}\exp \left(-\frac{1}{2} (\vv_t - \vu_t)^T \Sigma^{-1} (\vv_t - \vu_t) \right) \label{eq:density_func_controlled},
\end{align}
where $Z = \left( (2\pi)^m |\Sigma|\right)^{\frac{1}{2}}$. The other distribution is the baseline distribution $\mathbb{P}$ with density function $\vp(V)$ that typically represents the ``uncontrolled'' distribution with mean $\vu_t = 0$.
Through \ac{IS} and Jensen's inequality, we rewrite the free energy to be with respect to the controlled distribution:
\begin{align*}
%\mathcal{F}(S, \Pb, \vx_0, \lambda) &\leq -\lambda \ExP{\mathbb{Q}}{\log\left(\exp\left(-\frac{1}{\lambda} S(V, \vx_0)\right) \frac{\rd\mathbb{P}}{\rd\mathbb{Q}}\right)} \\
\mathcal{F}(S, \Pb, \vx_0, \lambda) &\leq \ExP{\Qb}{S(V, \vx_0)} + \lambda \KL{\Qb}{\Pb}.
\end{align*}
We define the optimal distribution $\Qb^*$ that achieves the lower bound on cost in expectation with density defined as:
\begin{align*}
    \vq^*(V) = \frac{1}{\eta} \exp \left(-\frac{1}{\lambda} S(V, \vx_0) \right) \vp(V).
    %\label{eq:density_func_controlled_optimal}
\end{align*}
It is impossible to sample from this optimal control distribution directly, but the KL-divergence between the two control distributions $\mathbb{Q}^*$ and $\mathbb{Q}$ can be utilized as an information theoretic objective to ``move'' the controlled distribution close to the optimal distribution \cite{williams2018information}. The problem becomes minimizing the KL divergence between the controlled distribution and the optimal controlled distribution: 
\begin{align}
    U^* = \argmin_U \KL{Q^*}{Q}.
    \label{eq:control_obj_MPPI}
\end{align}
In \cite{williams2018information}, the authors show that the solution to \cref{eq:control_obj_MPPI} is equivalent to:
\begin{align*}
    \vu_t^* = \int q^*(V) \vv_t dV.
    %\label{eq:optimal_u_before_importance_sampling}
\end{align*}
Using \ac{IS} to instead draw samples from our controlled distribution $\Qb$ provides the following:
\begin{align*}
    \vu_t^* = \ExP{\Qb}{w(V) \vv_t},
\end{align*}
where $w(V) = \frac{\vq^*(V)}{\vp(V)} \frac{\vp(V)}{\vq(V)}$ is the \ac{IS} weight. It can be shown that this weight is equivalent to:
\begin{align}
    w(V) = \frac{1}{\eta} \exp \left( -\frac{1}{\lambda} S(V, \vx_0) + \sum_{t=0}^{T-1} (\vv_t + \frac{1}{2} \vu_t)^T\Sigma^{-1}\vu_t\right) \nonumber.
\end{align}

MPPI has a key failure case that is addressed in \ac{Tube-MPPI}. The \ac{IS} trajectory is parameterized by an open loop control sequence, therefore a large disturbance can push the sampling into undesirable regions of the state space. The implicit assumption made in \ac{IS} is that the new state is close to the previous state, which breaks down in practical applications of the algorithm.

\noindent \textbf{Tube-MPPI:} 
An alternative setup is to keep the importance sampler focused around a nominal system that rejects disturbances. We consider two dynamical systems, one that represents the \textit{real} system from  \cref{eq:dyn} that experiences disturbances $\vw$, and another that represents the \textit{nominal} system $\vx^*$ in \cref{eq:nominaldyn} that is immune from disturbances. Let $\vu^*$ be the nominal control for:
\begin{align}
    \vx^*_{t+1} &= \vx^*_t + \vF(\vx^*_t, \vu^*_t + \epsilon_t)\Delta t \label{eq:nominaldyn}.
\end{align}

If state divergence is encountered, a feedback controller is applied to track the nominal system. The nominal state $\vx^*$ is reset to the real state $\vx$ if the difference between the free energy calculated at the nominal state and the real state is less than $\alpha$. In this case $\alpha > 0$ represents a free energy threshold. The importance sampler of the nominal state is always used when calculating rollouts from both the nominal and the real state, but it is reset to the real system control trajectory when the nominal state is reset.
%if the nominal state has been reset, the \ac{IS} trajectory from the real system is used on both.

There are two main shortcomings of \ac{Tube-MPPI}. The first is that the nominal state is chosen independently of the real state, therefore they can easily diverge. 
%Ultimately the goal of the nominal state is to assist with the real system, but it has no knowledge of the real system. 
State divergence is the primary failure case of \ac{Tube-MPPI}~\cite{williams2018robust}. The second shortcoming is that the importance sampler is ignorant of the ancillary feedback controller. The \ac{IS} trajectories do not reflect actual system behavior, thus providing a biased estimate of the optimal control. This shortcoming limits the performance of Tube-MPPI; with small disturbances, it will perform just as \ac{MPPI} by electing to reset the nominal state. A large disturbance will result in the nominal state continuing independently and relying entirely on the tracking controller to bring the real system back to the nominal. %These problems are addressed in \ac{RMPPI}.

\noindent \textbf{Robust-MPPI Architecture:}
\ac{RMPPI} extends the idea behind \ac{Tube-MPPI} to create a new controller framework with four pieces.

\subsubsection{Augmented State Space}
 Samples are drawn from two systems, the nominal system \cref{eq:nominaldyn} and the real system \cref{eq:augmented_real}:
\begin{align}
    \vx_{t+1} = \vx_t + \vF(\vx_t, \vu_t + k(\vx_t, \vx^*_t) +  \epsilon_t)\Delta t + \vw_t  \label{eq:augmented_real}, \\
    ||\vx_{t+1} - \vx^*_{t+1}|| \leq \gamma^t ||\vx_0 - \vx_0^*||,~ 0 < \gamma < 1 \label{eq:exptracking} ~~\forall t.
\end{align}
%\textcolor{red}{In \cref{eq:augmented_real}, there is a key difference when compared to \cref{eq:dyn} in previous algorithms;} 
\cref{eq:augmented_real} incorporates feedback control into the forward propagated dynamics. To generate the bound on free energy growth, the feedback controller, $k(\vx, \vx^*)$, is assumed to drive the two systems together with a contraction rate, $\gamma$, seen in \cref{eq:exptracking}.  
%Second, \ac{RMPPI} also uses a different method to propagate the nominal state between optimization iterations that guarantees the nominal state will not violate constraints. Finally \ac{RMPPI} uses the real system state as part of the state cost for the nominal system, this encourages the nominal system control to assist in steering the actual system back to the nominal.

\subsubsection{Importance Sampling}
For the augmented state space, samples are generated by simulating the combined system forward using \cref{eq:nominaldyn,eq:augmented_real}. A noise profile, $\epsilon$, is used when propagating the augmented state space. The importance sampler defines the control sequence for \emph{both} systems and is taken from the \emph{nominal system}. From the augmented state space, only the real system trajectories are utilized to ensure an unbiased estimate of the optimal control. The interaction between the nominal and real systems motivates the need for a mixed cost, $\tilde{S}(U, \vx_0,\vx^*_0)$, further discussed in \cref{lemma:CostFunction}. This lemma will prove the equivalence of this new cost to the original cost of the nominal state. Next, \cref{lemma:AISweights} derives the \ac{IS} weights for the augmented dynamical system. These weights are used for computing the unbiased estimate of the optimal control. The combination of this work results in \cref{Algorithm:robustMPPI/AIS}.
\begin{algorithm}[htbp!]
\footnotesize
\SetKwInOut{Input}{Given}
\Input{
      $\vF, k$: System dynamics and feedback controller\;
      $q$, $\phi$, $\Sigma$, $T$, $N$: Cost function and sampling parameters\;
      $\lambda, \alpha$: Temperature and cost threshold
      }
\SetKwInOut{Input}{Input}
\Input{
    $\vx_0, \vx_0^*$, $U$, $K$: Real/nominal state, IS sequence, feedback\;
}
\BlankLine
\For{$n \leftarrow 1$ \KwTo $N$}{
    $\vx \leftarrow \vx_0$;\quad $\vx^* \leftarrow \vx_0^*$;\quad $S_n, \hat{S}_n, S^{real}_n \leftarrow 0$\;
    Sample $\mathcal{E}^n = \left( \epsilon_0^n \dots \epsilon_{T-1}^n \right), ~\epsilon_t^n \in \mathcal{N}(0, \Sigma)$\;
    \For{$t \leftarrow 0$ \KwTo $T-1$}{
        $k_{fb} \leftarrow K_t(\vx, \vx^*)$\;
        $\vx \leftarrow \vF\left(\vx, \vu_t + \epsilon_t^n + k_{fb} \right)$\;
        $\vx^* \leftarrow \vF\left(\vx^*, \vu_t + \epsilon_t^n \right)$\;
        $\hat{S}_n \pluseq q(\vx) + \frac{\lambda (1-\beta)}{2} k_{fb}^\rT \Sigma^{-1} k_{fb}$\;
        $S_n \pluseq q(\vx^*)$ \;
        $S^{real}_n \pluseq q(\vx) + \frac{\lambda  (1-\beta)}{2}\left(\vu + k_{fb}\right)^\rT\Sigma^{-1}\left(\vu + 2\epsilon + k_{fb}\right)$
    }    
    $\hat{S}_n += \phi(\vx)$, $S_n += \phi(\vx^*)$, $S^{real}_n += \phi(\vx)$\;
    $S^{nom}_n = \frac{1}{2}S_n + \frac{1}{2}\max\left( \min \left(\hat{S}_n, \alpha \right), S_n \right)$\;
    \For{$t \leftarrow 0$ \KwTo $T-1$}{
        $S^{nom}_n \pluseq \frac{\lambda}{2}\sum_{t=0}^{T-1}\left( \vu^\rT \Sigma^{-1}\vu_t + 2\vu^\rT \Sigma^{-1}\epsilon_t^n\right) $\;
    }
}
\Return{$S^{nom}$,
        $S^\text{real}$,
        $\mathcal{E}$\;
}
\caption{Augmented Importance Sampler (AIS)} 
\label{Algorithm:robustMPPI/AIS}
\end{algorithm}

\subsubsection{Nominal System Propagation}
Unlike \ac{Tube-MPPI}, \ac{RMPPI} performs an optimization in order to find the best location for the nominal state. The goal of the nominal state is to remain as close to the actual state without resulting in a large increase in the free energy. Therefore we define a search around the current state and the previous nominal state (in a domain $\mathcal{D}$) such that the estimated free energy of the next nominal state is below our free energy threshold $\alpha$ in the following:
\begin{align}
\label{Equation:robustMPPI/NominalUpdatePractical1}
\vx^* &= \argmin_{\vp_i,~i \in \{0, 1, \dots R \}} \|\vp_i - \vx \|, \\
~~~~&s.t ~~~~ \mathcal{F}_{MC}(S, \Pb, \vp_i, \lambda) \le \alpha ~~~
\forall i \in \{0, 1, \dots R \}.
\end{align}
$\alpha$ is a hyperparameter that can have a physical meaning, usually the cost of crashing. $R$ is the number of candidates in $\mathcal{D}$. We approximate the free energy at each candidate by running an iteration of MPPI with a reduced number of samples from all $R$ locations. For computational reasons, we use a line search  between the previous nominal state ($\vx^*_{t-1}$), propagated nominal state ($\vx^*_{t}$) and, the real system state ($\vx_t$). This search is summarized by \cref{Algorithm:robustMPPI/NSP}.

As a result of the exponential tracking assumption between the real and nominal states and the solution of the nominal system propagation optimization problem, we are able to derive a bound on the growth of the free energy of the real system, shown in \cref{lemma:bound}. This bound can be used to determine how close the system is to task failure.
\begin{algorithm}[htbp!]
\footnotesize
\SetKwInOut{Input}{Given}
\Input{$\vF$, $\lambda$, $\alpha$: System dynamics, temperature, cost threshold\;
      $q$, $\phi$, $\Sigma$, $T$, $N$, $\beta$: Cost function and sampling parameters\;
      }
\SetKwInOut{Input}{Input}
\Input{
    $\vx_0, \vx_0^*$, $U$: Real/nominal state, IS sequence\;
}
\BlankLine
$\{\vp_0, \dots \vp_R \} \leftarrow Candidates(\vx_0, \vx^*_0, \vF, U)$\;
\For{$i \leftarrow 0$ \KwTo $R$}{
    $\vp^* \leftarrow \vp_i$\;
    \uIf{$i > 0$ }{ 
        %$U^i$ $\leftarrow$ SlideControl($U$, 1)\;
        \lFor{$t \leftarrow 1$ \KwTo $T-1$}{
            $\vu_{t-1}^i = \vu_t$
        }
        $\vu_{T-1}^i = 0$\;
    } 
    \lElse{
        $U^i = U$
    }
    \For{$n \leftarrow 1$ \KwTo $N$}{
        Sample $\mathcal{E}^n = \left( \epsilon_0^n \dots \epsilon_{T-1}^n \right), ~\epsilon_t^n \in \mathcal{N}(0, \Sigma)$\;
        \For{$t \leftarrow 0$ \KwTo $T-1$}{
            $\vp^* = \vp^* + \vF(\vp^*, \vu_t^i + \epsilon_t^n)\Delta t$\;
            $S_n \pluseq q(\vp^*) + \frac{\lambda(1-\beta)}{2}\left(\vu_t^\rT \Sigma^{-1}\vu_t + 2\vu_t^\rT \Sigma^{-1} \epsilon_t \right)$
            }
        $S_n += \phi(\vp^*)$
    }
    \lFor{$n \leftarrow 1$ \KwTo $N$}{
        $\eta \pluseq \exp\left(-\frac{1}{\lambda}S_n \right)$
    }
    $\mathcal{F}_i = -\lambda \log\left( \eta \right)$\;
}
$a = \argmin_i \left( \| \vp_i - \vx \| \right), ~~ s.t ~~ \mathcal{F}_i \le \alpha ~~$\;
\lIf{a = NULL}{a = 0 }
\Return{ $\vp_a$, $U^a$ \;}
\caption{Nominal State Propagation}
\label{Algorithm:robustMPPI/NSP}
\end{algorithm}
%\textbf{Nominal System Importance Sampling Costs:}
%Here we address the both shortcomings of \ac{Tube-MPPI}, incorporation of the real dynamics (and indirectly the feedback) into the nominal system costs. The nominal system is propagated independently from the real state by selecting the current nominal state with \cref{Equation:robustMPPI/NominalUpdatePractical1} then propagating the dynamics forward with \cref{eq:nominaldyn}. %The state to path cost function for the importance sampler can be improved, however, by incorporating the real system cost. %\cite{williams2019model} derives a combined cost $\tilde{S}(U,\vx^*_0)$, shown in \cref{Equation:robustMPPI/MaxMin}, for the nominal system, and proves that ${S}(U,\vx^*_0) \leq \alpha \iff \tilde{S}(U, \vx_0,\vx^*_0) \leq \alpha$.
%Here we propose an combined cost $\tilde{S}(U, \vx_0,\vx^*_0)$ in the following lemma.

\subsubsection{Tracking Controllers}
The role of the tracking controller inside \ac{RMPPI} is to guide the importance sampler of the real system (which is subject to disturbances) to the nominal system (which is immune from disturbances). Ideally, this tracking controller will satisfy \cref{eq:exptracking}, however this property is difficult to achieve. Previously in \cite{williams2019model}, locally linear approximations of the dynamics were utilized along the nominal trajectory to then compute LQR gains for tracking by implementing \iac{iLQG} controller \cite{todorov2005generalized}.
% This type of tracking control is simple, but effective, and is amenable to being implemented in parallel. 
However, this form of feedback does not provide the same guarantee for nonlinear systems outside of a small domain. This leads to the investigation of \acp{CCM} as an alternative feedback controller.

%\textbf{Control Contraction Metrics}
\acp{CCM} were introduced by \cite{manchester2017control} and are an example of a feedback controller that can guarantee exponential convergence for nonlinear systems, provided certain conditions are met. It is an extension of the contraction theory shown in \cite{lohmiller1998contraction}. The \ac{CCM} is designed for systems that are control and disturbance affine.

The two contracting trajectories in this work are the real and nominal trajectories:  $\{\vx_0, \vx_1, ...,\vx_{T-1}\}$ and $\{\vx^*_0, \vx^*_1, ...,\vx^*_{T-1}\}$, that are separated by $\delta_\vx$ %= \vx - \vx^*$, where $\delta_\vx$ 
which belongs to the tangent space $\mathcal{T}$ of the dynamical system at state $\vx$. In short, contraction theory designs a differential Lyapunov function candidate of the form: $V(\vx, \delta_\vx) = \delta_\vx^TM(\vx)\delta_\vx$, where $M(\vx)$ is a mapping from $\Rb^{n_x}$ to symmetric positive definite $n \times n$ matrices. For contracting systems with \emph{contraction rate} $\lambda$, the following expression is satisfied: $\dot{V}(\vx, \delta_x) \leq -2\lambda V(\vx, \delta_\vx),~ \forall \vx \in \Rb^{n_x},~ \forall \delta_\vx \in \mathcal{T}$. To reiterate, utilizing the CCM controller allows us to validate the assumption on exponential tracking, and tighten the bound on free energy growth, which will be shown in \cref{lemma:bound}.

\noindent \textbf{Real Time Performance:}
To ensure real time performance of the RMPPI controller, the sampling required for estimates of the free energy is done on a GPU. The same set of feedback gains is reused in the case of \ac{iLQG} for every sample to save computation; ideally each nominal trajectory would have its own optimal set of gains. Additionally, with the current form of the \ac{CCM} feedback controllers, it is not possible to implement the optimization within each thread of a GPU. As such, experiments for the CCM feedback are limited to CPU only and are \emph{not} real time.
\section{Theoretical Bound for Robust MPPI}
\label{sec:rmppiderive}
Next we prove three lemmas for the construction of \ac{RMPPI} and the bound on the real system free energy.

\noindent \textbf{Mixed Cost for the Nominal System:}
 The purpose of the mixed cost, \cref{Equation:robustMPPI/MaxMin}, is to penalize trajectories sampled from the nominal system which require large control effort from the feedback controller. Since the cost of the nominal system is subject to the upper bound $\alpha$, this mixed cost must also satisfy this bound.
\begin{lemma} Define the following augmented state-to-path cost function $\tilde{S}(V, \vx_0, \vx_0^*)$ as in the following:
\begin{align}
\label{Equation:robustMPPI/MaxMin}
\tilde{S}(V, \vx_0, \vx_0^*) &= \frac{1}{2}S(V, \vx_0^*)  \\ \nonumber
&~~+ \frac{1}{2}\max\left( \min \left(\hat{S}(V, \vx_0, \vx_0^*), \alpha \right), S(V, \vx_0^*) \right), \nonumber
\end{align}

where $\hat{S}$ is defined by
\begin{align}
\label{eq:s_hat}
\hat{S}(V, \vx_0, \vx_0^*) &= S(V, \vx_0) \\ \nonumber
&~~+ \frac{\lambda(1-\beta)}{2}\sum_{t=0}^{T-1} k(\vx_0, \vx_0^*)^\rT \Sigma^{-1} k(\vx_0, \vx_0^*).
\end{align}
Here $\lambda > 0$ is the inverse cost temperature, $\beta \in (0,1)$ is the control cost smoothing parameter, and $\alpha$ is the free energy threshold.
The augmented state-to-path cost function satisfies $\tilde{S}(V, \vx_0, \vx_0^*) \leq \alpha$ if and only if the state-to-path cost function for the nominal state satisfies $S(V, \vx_0^*) \leq \alpha$.
\label{lemma:CostFunction}
\end{lemma}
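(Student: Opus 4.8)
The lemma is an iff statement: $\tilde{S}(V, \vx_0, \vx_0^*) \leq \alpha$ if and only if $S(V, \vx_0^*) \leq \alpha$, where
$$\tilde{S} = \tfrac{1}{2}S(V, \vx_0^*) + \tfrac{1}{2}\max\bigl(\min(\hat{S}, \alpha), S(V, \vx_0^*)\bigr).$$

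Let me set up notation. Write $A = S(V, \vx_0^*)$ (nominal cost), $B = \hat{S}(V, \vx_0, \vx_0^*)$, and $\alpha$ the threshold. Note $\hat{S} = S(V,\vx_0) + (\text{nonneg control penalty})$, so $B \geq S(V,\vx_0)$, but more importantly I just need to treat $B$ as some real number. Then
$$\tilde{S} = \tfrac{1}{2}A + \tfrac{1}{2}\max\bigl(\min(B,\alpha),\, A\bigr).$$

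So this is purely an algebraic fact about the function $g(A,B) = \tfrac{1}{2}A + \tfrac{1}{2}\max(\min(B,\alpha),A)$, and I want: $g(A,B) \leq \alpha \iff A \leq \alpha$.

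Let me verify the key structure. The inner expression is $M := \max(\min(B,\alpha), A)$. Because of the $\min(B,\alpha)$, the first argument of the max is always $\leq \alpha$. So $M = \max(\text{something} \leq \alpha, A)$.

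Let me think about this carefully.\textbf{Setup and reduction.}
The statement is a purely algebraic fact about the nesting of $\max$ and $\min$, so I would first strip away the dynamics. Abbreviate $A := S(V, \vx_0^*)$, $B := \hat{S}(V, \vx_0, \vx_0^*)$, and recall $\alpha$ is the threshold. Then the augmented cost is $\tilde{S} = \tfrac{1}{2}A + \tfrac{1}{2}M$ where $M := \max\bigl(\min(B,\alpha),\,A\bigr)$. The plan is to prove the equivalence $\tilde{S} \le \alpha \iff A \le \alpha$ by analyzing $M$ directly, since nothing about the specific form of $\hat S$ in \cref{eq:s_hat} is needed beyond $B$ being a real number. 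I would note explicitly that the control-penalty term in \cref{eq:s_hat} is nonnegative (as $\beta \in (0,1)$, $\lambda>0$, and $\Sigma^{-1}$ is positive definite), which is the only structural fact used, and only to justify that $\hat S$ is well defined as a cost.

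\textbf{The key observation.}
The crucial structural feature is that the inner $\min(B,\alpha)$ caps the first argument of the outer $\max$ at $\alpha$. I would record the two resulting bounds on $M$. First, $\min(B,\alpha) \le \alpha$ always, so $M = \max(\min(B,\alpha), A) \le \max(\alpha, A)$; consequently if $A \le \alpha$ then $M \le \alpha$. Second, $M \ge A$ always, by definition of $\max$. These two facts drive both directions.

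\textbf{Forward direction} ($\tilde{S} \le \alpha \Rightarrow A \le \alpha$). Using $M \ge A$, I get $\tilde{S} = \tfrac{1}{2}A + \tfrac{1}{2}M \ge \tfrac{1}{2}A + \tfrac{1}{2}A = A$. Hence $A \le \tilde{S} \le \alpha$, giving $S(V,\vx_0^*) = A \le \alpha$ immediately.

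\textbf{Reverse direction} ($A \le \alpha \Rightarrow \tilde{S} \le \alpha$). Assuming $A \le \alpha$, the first bound above gives $M \le \max(\alpha, A) = \alpha$. Therefore $\tilde{S} = \tfrac{1}{2}A + \tfrac{1}{2}M \le \tfrac{1}{2}\alpha + \tfrac{1}{2}\alpha = \alpha$. This closes the equivalence.

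\textbf{Main obstacle.}
There is no analytic obstacle: the result is an elementary lattice inequality, and the only place to be careful is bookkeeping the direction of each bound on $M$ (that $\min(B,\alpha)\le\alpha$ feeds the reverse direction and $M\ge A$ feeds the forward direction). The one point I would state clearly for the reader is \emph{why} the equivalence matters: it shows the mixed cost $\tilde S$ inherits the threshold $\alpha$ exactly from the nominal cost $S(V,\vx_0^*)$, independent of how large the feedback term in $\hat S$ becomes, which is precisely what lets the nominal-state free-energy bound of \cref{lemma:bound} be stated in terms of $\alpha$ alone.
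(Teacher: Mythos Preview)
Your proof is correct and matches the paper's argument for the direction $S(V,\vx_0^*)\le\alpha \Rightarrow \tilde S\le\alpha$: both of you use $\min(\hat S,\alpha)\le\alpha$ to bound the inner $\max$ by $\max(\alpha,A)=\alpha$. For the converse direction, however, the paper proceeds by a case split on whether $\min(\hat S,\alpha)$ or $S(V,\vx_0^*)$ wins the outer $\max$, and then manipulates each case separately. Your single-line observation that $M\ge A$ always, hence $\tilde S=\tfrac12 A+\tfrac12 M\ge A$, replaces that case analysis entirely and is genuinely cleaner; it also makes transparent that the equivalence is a pure lattice identity independent of any positivity of $\hat S$. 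The paper's case argument buys nothing extra, so your streamlining is a strict improvement in presentation.
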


\begin{proof}
$S(V, \vx_0^*) < \alpha$ by construction. From \cref{eq:s_hat}, both terms on the RHS are positive, thus $\min \left(\hat{S}(V, \vx_0, \vx_0^*), \alpha \right) \leq \alpha$ for any $V, \vx_0$, and $\vx_0^*$. This implies:
\begin{align}
\label{eq:nominal_cost_proof_way_1}
    \tilde{S}(V, \vx_0, \vx_0^*) &\leq  \frac{1}{2}S(V, \vx_0^*)
+ \frac{1}{2}\max\left( \alpha, S(V, \vx_0^*) \right).
\end{align}
Using that $S(V, \vx_0^*) \leq \alpha$, we can see that \cref{eq:nominal_cost_proof_way_1} becomes $\tilde{S}(V, \vx_0, \vx_0^*) \leq \frac{1}{2}S(V, \vx_0^*) + \frac{1}{2}\alpha \leq \alpha$. This proves the statement is true in one direction.

Now suppose that $\tilde{S}(V, \vx_0, \vx_0^*) \leq \alpha$.
We have two cases to consider, with the first being when $\min \left(\hat{S}(V, \vx_0, \vx_0^*), \alpha \right) > S(V, \vx_0^*) $. That case gives us the following equation:
% \begin{equation*}
%     \tilde{S}(V, \vx_0, \vx_0^*) = \frac{1}{2}S(V,\vx_0^*) + \frac{1}{2} \min \left(\hat{S}(V, \vx_0, \vx_0^*), \alpha \right),
% \end{equation*}
\begin{equation*}
    \tilde{S}(V, \vx_0, \vx_0^*) - \frac{1}{2} \min \left(\hat{S}(V, \vx_0, \vx_0^*), \alpha \right) = \frac{1}{2}S(V,\vx_0^*).
\end{equation*}
Since $\frac{1}{2}\min \left(\hat{S}(V, \vx_0, \vx_0^*), \alpha \right) > \frac{1}{2}S(V, \vx_0^*)$, we have:

\begin{equation*}
    \underbrace{\tilde{S}(V, \vx_0, \vx_0^*) - \frac{1}{2} \min \left(\hat{S}(V, \vx_0, \vx_0^*), \alpha \right)}_{=\frac{1}{2}S(V, \vx_0^*)} \leq \frac{1}{2} \tilde{S}(V, \vx_0, \vx_0^*),% \leq \frac{1}{2}\alpha
\end{equation*}
and removing the $\frac{1}{2}$ from both sides gives $S(V,\vx_0^*) \leq \alpha$.

The other case is when $S(V,\vx_0^*) > \min \left(\hat{S}(V, \vx_0, \vx_0^*), \alpha \right)$. In that case, we have:

\begin{equation}
    \max\left( \min \left(\hat{S}(V, \vx_0, \vx_0^*), \alpha \right), S(V, \vx_0^*) \right) = S(V, \vx_0^*),
\end{equation}
which implies that $\frac{1}{2}S(V, \vx_0^*) + \frac{1}{2}S(V, \vx_0^*) =  S(V, \vx_0^*) = \tilde{S}(V, \vx_0, \vx_0^*) \leq \alpha$ which proves the result.
\end{proof}

\noindent \textbf{Importance Sampling Weights:}
Next we derive the \ac{IS} weights for the augmented dynamical system, \cref{eq:augmented_real}, which take into account the action of the feedback controller.
\begin{lemma}
\label{lemma:AISweights}
Let $\Qb_A$ denote the probability distribution defined by sampling from the augmented system dynamics, and let $\Pb$ be the distribution defined by the uncontrolled system. Then the \ac{IS} weight (Radon-Nikodym derivative) takes the form:
\begin{align*}
\frac{\rd \Pb}{\rd \Qb_A} &= \exp \bigg( -\frac{1}{2}\sum_{t=0}^{T-1} \left( \vu_t + k(\vx_t, \vx_t^*) \right)^\rT \Sigma^{-1} \nonumber \\ 
&\left( \vu_t + k(\vx_t, \vx_t^*) + 2\epsilon_t\right) \bigg).
\end{align*}
\end{lemma}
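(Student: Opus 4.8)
The plan is to obtain the Radon-Nikodym derivative directly as the ratio of the path densities of the applied control sequence under the two measures, generalizing the fixed-mean density of \cref{eq:density_func_controlled} to the case where the injected control carries the feedback term. Under $\Qb_A$ the control actually applied to the real dynamics \cref{eq:augmented_real} at step $t$ is $\vv_t = \vu_t + k(\vx_t, \vx_t^*) + \epsilon_t$ with $\epsilon_t \sim \mathcal{N}(0,\Sigma)$, whereas under the uncontrolled baseline $\Pb$ we have $\vv_t \sim \mathcal{N}(0,\Sigma)$. First I would write the joint law of the sequence $V = \{\vv_0,\dots,\vv_{T-1}\}$ under each measure and form the quotient, so that the Gaussian normalizers $Z$ cancel term by term.

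The subtlety to handle before the product form is legitimate is that $k(\vx_t,\vx_t^*)$ is not a constant: it depends on $\vx_t$, which is a deterministic function of $\vx_0,\vx_0^*$ and the noise history $\epsilon_0,\dots,\epsilon_{t-1}$. I would therefore factor both measures sequentially using the Markov structure of the dynamics, $\vq_A(V) = \prod_t \vq_A(\vv_t \mid \vv_{0:t-1})$ and $\vp(V) = \prod_t \vp(\vv_t)$. Conditioned on the past, $\vx_t$ and hence $k_t := k(\vx_t,\vx_t^*)$ are fixed, so the conditional law of $\vv_t$ under $\Qb_A$ is exactly $\mathcal{N}(\vu_t + k_t,\Sigma)$, and the ratio collapses to
\[
\frac{\rd\Pb}{\rd\Qb_A} = \prod_{t=0}^{T-1} \exp\left( -\tfrac{1}{2}\vv_t^\rT \Sigma^{-1}\vv_t + \tfrac{1}{2}(\vv_t - \vu_t - k_t)^\rT \Sigma^{-1}(\vv_t - \vu_t - k_t) \right).
\]
The rest is algebra: writing $a_t := \vu_t + k_t$ and expanding, the $\vv_t^\rT \Sigma^{-1}\vv_t$ terms cancel and each exponent reduces to $-a_t^\rT \Sigma^{-1}\vv_t + \tfrac{1}{2}a_t^\rT \Sigma^{-1} a_t$; substituting the $\Qb_A$-realization $\vv_t = a_t + \epsilon_t$ turns this into $-\tfrac{1}{2}a_t^\rT \Sigma^{-1}(a_t + 2\epsilon_t)$, which after summing over $t$ is precisely the claimed expression.

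I expect the only genuine obstacle to be the rigorous justification of the product form in the presence of the path-dependent feedback: one must argue that, since the system is Markov and $\epsilon_t$ is independent of the noise history while $\vx_t$ is measurable with respect to it, the two measures are mutually absolutely continuous (they differ only by a state-dependent mean shift at common covariance $\Sigma$), so that $\Pb \ll \Qb_A$ holds and the conditional densities multiply into a well-defined derivative. The cancellation of $Z$ and the final completion of the square are routine. I would close by observing that the very substitution $\vv_t = a_t + \epsilon_t$ is what renders the weight computable from the sampled noise in \cref{Algorithm:robustMPPI/AIS}, linking the lemma back to the algorithm.
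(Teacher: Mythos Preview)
Your proposal is correct and follows essentially the same route as the paper: form the ratio of the Gaussian path densities, cancel the normalizers and the quadratic $\vv_t^{\rT}\Sigma^{-1}\vv_t$ terms, then substitute $\vv_t = \vu_t + k(\vx_t,\vx_t^*) + \epsilon_t$ to express the weight in terms of the sampled noise (the paper computes $\rd\Qb_A/\rd\Pb$ first and inverts, whereas you go straight to $\rd\Pb/\rd\Qb_A$, which is immaterial). The one genuine addition in your write-up is the explicit Markov/conditional-factorization argument justifying the product form despite the path-dependent feedback $k(\vx_t,\vx_t^*)$; the paper simply asserts the product density without addressing this, so your version is in fact more careful on that point.
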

\begin{proof}
%  \edit{We start by writing the weight as the ratio between the uncontrolled system and the augmented system dynamics distributions.}
 
%Recall from \cref{eq:density_func_uncontrolled} the form of $\Pb$. 
Probability distribution $\Qb_A$ is defined as: %\remove{in \cref{eq:density_func_augmented}}:
 \begin{align*}
    \vq_A(V) =& \left( (2\pi)^m |\Sigma|\right)^{-\frac{1}{2}} \prod_{t=0}^{T-1}\exp \bigg(-\frac{1}{2} (\vv_t - \vu_t + k(\vx_t, \vx_t^*) )^T \label{eq:density_func_augmented}\nonumber \\ 
    & \Sigma^{-1} (\vv_t - \vu_t + k(\vx_t, \vx_t^*)) \bigg) .
\end{align*}
 
%  \edit{Both are Gaussian distributions, where the uncontrolled system is centered around $0$ and the augmented system is centered about $\vu_t + k(\vx_t, \vx^*_t)$. }
\
 The ratio of the two distributions is equal to:
\begin{equation}
\underbrace{\frac{\rd \Qb_A}{\rd \Pb}}_{=\star} = \frac{
\exp\left( -\frac{1}{2}\sum_{t=0}^{T-1} \valpha_t^\rT \Sigma^{-1} \valpha_t \right)
}{\exp\left( -\frac{1}{2} \vv_t^\rT \Sigma^{-1} \vv_t \right)}, \nonumber
\end{equation}
where $\valpha_t = \vv_t - (\vu_t + k(\vx_t, \vx_t*)) $ 
Combining exponential terms, expanding multiplication, and cancelling like terms can simplify the equation:
\begin{align*}
    %\star =& \exp \bigg( -\frac{1}{2} \sum_{t=0}^{T-1} \Big(\vv_t - \big(\vu_t + k(\vx_t, \vx^*_t)\big)\Big)^T\Sigma^{-1} \\ &\Big(\vv_t - \big(\vu_t + k(\vx_t, \vx^*_t)\big)\Big) - \vv_t^T\Sigma^{-1}\vv_t \bigg) \\
    \star =& \exp \bigg( -\frac{1}{2} \sum_{t=0}^{T-1} \big(\vu_t + k(\vx_t, \vx^*_t)\big)^T\Sigma^{-1}\big(\vu_t + k(\vx_t, \vx^*_t)\big)) \\& - 2 \big(\vu_t + k(\vx_t, \vx^*_t)\big)^T\Sigma^{-1}\vv_t \bigg).
\end{align*}
% \begin{align*}\begin{split}
% \star &= \exp\left(-\frac{1}{2}\sum_{t=0}^{T-1} \left( \vu_t + k(\vx_t, \vx_t^*) \right)^\rT \Sigma^{-1}\left( \vu_t + k(\vx_t, \vx_t^*) \right) \right) \\
% &\times \exp\left(- \frac{1}{2}\sum_{t=0}^{T-1} \vv_t^\rT \Sigma^{-1}\left( \vu_t + k(\vx_t, \vx_t^*) \right)  \right),
% \end{split}
% \end{align*}
Rewriting $\vv_t$ in terms of zero-mean noise $\epsilon_t$, gives us $\vv_t = \vu_t + k(\vx_t, \vx_t^*) + \epsilon_t$ and the following simplification:
% \begin{align}\begin{split}
% \star &= \exp\left( \frac{1}{2}\sum_{t=0}^{T-1} \left( \vu_t + k(\vx_t, \vx_t^*) \right)^\rT \Sigma^{-1}\left( \vu_t + k(\vx_t, \vx_t^*) \right) \right)\\
% & \times \exp\left(\frac{1}{2}\sum_{t=0}^{T-1} \epsilon_t^\rT \Sigma^{-1}\left( \vu_t + k(\vx_t, \vx_t^*) \right)  \right) . \end{split}
% \end{align}
\begin{align*}
    %\star =& \exp \bigg(-\frac{1}{2} \sum_{t=}^{T-1} \big(\vu_t + k(\vx_t, \vx^*_t)\big)^T\Sigma^{-1}\big(\vu_t + k(\vx_t, \vx^*_t)\big) \\ &- 2 \left(\vu_t + k(\vx_t,\vx_t^*)\right)^T\Sigma^{-1}\big(\vu_t + k(\vx_t, \vx_t^*) + \epsilon_t\big) \bigg) \\
    %=& \exp\bigg(-\frac{1}{2} \sum_{t=}^{T-1} - \big(\vu_t + k(\vx_t, \vx^*_t)\big)^T\Sigma^{-1}\big(\vu_t + k(\vx_t, \vx^*_t)\big) \\& - 2 \left(\vu_t + k(\vx_t,\vx_t^*)\right)^T\Sigma^{-1}\epsilon_t \bigg) \\
    \star =& \exp\bigg(\frac{1}{2} \sum_{t=}^{T-1} \big(\vu_t + k(\vx_t, \vx^*_t)\big)^T\Sigma^{-1} \big(\vu_t + k(\vx_t, \vx^*_t) + 2\epsilon_t \big) \bigg).
\end{align*}
For un-biasing the expectation estimated with samples from $\Qb_A$, we need the inverse of this derivative $\frac{\rd \Pb}{\rd \Qb_A}$, which can be found by simply adding a negative to the exponential term. This completes the proof.
% \begin{align*}
% \frac{\rd \Pb}{\rd \Qb_A} =& \exp\bigg(-\frac{1}{2} \sum_{t=}^{T-1} \big(\vu_t + k(\vx_t, \vx^*_t)\big)^T\Sigma^{-1} \\
% & \big(\vu_t + k(\vx_t, \vx^*_t) +  2\epsilon_t\big) \bigg).
% \end{align*}
\end{proof}
%The following lemma, drawn from \cite{williams2019model}, will be used in the computation of the real system costs in \cref{Algorithm:robustMPPI/AIS}.
\noindent \textbf{Bounding Free Energy Growth:}
The final lemma describes the bound on the \emph{growth} in free energy for the real system. This bound provides a measure of how close the system is to task failure.

\begin{lemma} Let $\mathcal{F}_{MC}(S, \Pb, \vx_0, \lambda)$ be the estimate of the free energy of the real system, $\mathcal{F}_{MC}(S, \Pb, \vx_0^*, \lambda)$ be the estimate of the free energy of the nominal system, $E_M^V$ be the upper bound on the uncertainty of the free energy estimates, $L_q$ be the Lipshitz constant of the state cost function $q(\vx)$, $L_{\phi}$ be the Lipshitz constant of the terminal cost function $\phi(\vx)$, $\gamma$ be the convergence rate of the controller, and denote $\| \vF(\vx_0, \vu) - \vx_0\| + \|\vx_0^* - \vx_0 \|$ as  $D_\vF(\vx_0, \vx_0^*, \vu)$. Then the growth of the free energy of the real system is bounded by:
\begin{align}
\Delta \mathcal{F}_{MC}(S, \Pb, &\vx_0, \lambda) ~~\le \left(\alpha -\mathcal{F}_{MC}\left(S, \Pb, \vx_0^*, \lambda\right)\right) + 2E_M^V \nonumber \\
&+ \left(L_{\phi} \gamma^T + L_q \frac{1 - \gamma^T}{1 - \gamma}\right)D_\vF(\vx_0, \vx_0^*, \vu) . \label{eq:bound_growth_rfe}
\end{align} 
\label{lemma:bound}
\end{lemma}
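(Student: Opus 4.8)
The plan is to bound the growth by chaining three independent ingredients: the Monte-Carlo estimation error $E_M^V$ that links each estimate $\mathcal{F}_{MC}$ to the corresponding true free energy $\mathcal{F}$, a Lipschitz bound of the free energy with respect to its initial condition built from $L_q$, $L_\phi$, and the contraction rate $\gamma$, and the feasibility guarantee $\mathcal{F}_{MC}(S,\Pb,\vx_0^*,\lambda)\le\alpha$ delivered by the nominal-state-propagation step (\cref{Algorithm:robustMPPI/NSP}). First I would record the estimation-error relation $|\mathcal{F}_{MC}(S,\Pb,\vx,\lambda)-\mathcal{F}(S,\Pb,\vx,\lambda)|\le E_M^V$, valid at any initial state; every time the argument passes between an estimate and the matching true free energy it spends one factor of $E_M^V$, and the two conversions (one for the real system, one for the nominal) produce the $2E_M^V$ term.

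The core step is to show that the free energy is Lipschitz in the initial state with constant $L_\mathcal{F}:=L_\phi\gamma^T+L_q\frac{1-\gamma^T}{1-\gamma}$. Fixing a common noise realization $V$ and propagating the real (closed-loop) and nominal systems from two initial conditions, the exponential tracking assumption \cref{eq:exptracking} yields $\|\vx_t-\vx_t^*\|\le\gamma^t\|\vx_0-\vx_0^*\|$ at every $t$. Applying the Lipschitz constants of the stage and terminal costs appearing in $S$ (\cref{eq:pathcost}) and summing the geometric series $\sum_{t=0}^{T-1}\gamma^t=\frac{1-\gamma^T}{1-\gamma}$ gives the pathwise estimate $|S(V,\vx_0)-S(V,\vx_0^*)|\le L_\mathcal{F}\|\vx_0-\vx_0^*\|$, uniformly in $V$. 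Because the constant is independent of $V$, I can factor $\exp(L_\mathcal{F}\|\vx_0-\vx_0^*\|/\lambda)$ out of the expectation inside \cref{eq:free_energy}, and applying $-\lambda\log(\cdot)$ transfers the constant intact to the free energy, giving $|\mathcal{F}(S,\Pb,\vx_0,\lambda)-\mathcal{F}(S,\Pb,\vx_0^*,\lambda)|\le L_\mathcal{F}\|\vx_0-\vx_0^*\|$.

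With these pieces in place the assembly is a single chain of inequalities on the free energy of the \emph{next} real state. I would first convert its estimate to the true free energy ($+E_M^V$), then compare it to the nominal system through the Lipschitz bound, where the triangle inequality $\|\vF(\vx_0,\vu)-\vx_0^*\|\le\|\vF(\vx_0,\vu)-\vx_0\|+\|\vx_0-\vx_0^*\|=D_\vF(\vx_0,\vx_0^*,\vu)$ replaces the state separation by $D_\vF$, then convert the nominal true free energy back to its estimate ($+E_M^V$), and finally invoke $\mathcal{F}_{MC}(S,\Pb,\vx_0^*,\lambda)\le\alpha$. Collecting terms and subtracting $\mathcal{F}_{MC}(S,\Pb,\vx_0^*,\lambda)$ to form $\Delta\mathcal{F}_{MC}$ reproduces \cref{eq:bound_growth_rfe}, with $\alpha-\mathcal{F}_{MC}(S,\Pb,\vx_0^*,\lambda)$ appearing precisely because the nominal estimate is upper-bounded by the threshold rather than cancelled. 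The main obstacle is the Lipschitz step: one must ensure the tracking contraction holds \emph{pathwise} for a shared noise sequence, so that the cost-difference bound is uniform in $V$ and the constant can be pulled through the $-\lambda\log\E_\Pb\exp(\cdot)$ soft-min nonlinearity; without that uniformity the constant cannot leave the expectation. A secondary subtlety is the bookkeeping of the geometric sum together with the separate $\gamma^T$ terminal factor, and aligning the indexing of \cref{eq:exptracking} with the stage-cost sum.
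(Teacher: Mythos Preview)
Your proposal is correct and follows the same architecture as the paper: a pathwise Lipschitz bound on $S$ via the contraction assumption \cref{eq:exptracking} and the geometric sum, monotonicity of $\exp$ and $\log$ to pass the constant through $-\lambda\log\E\exp(\cdot)$, the triangle inequality to replace the state separation by $D_\vF$, and the threshold $\mathcal{F}_{MC}(S,\Pb,\vx_0^*,\lambda)\le\alpha$ from nominal-state propagation. The one substantive difference is your Monte Carlo error model. You assume the symmetric bound $|\mathcal{F}_{MC}-\mathcal{F}|\le E_M^V$, whereas the paper decomposes the error as $\mathcal{F}=\mathcal{F}_{MC}+e_M^B+\epsilon_M^V$ with a deterministic bias $e_M^B$, assumed \emph{identical} for the real and nominal estimates because both are computed with the same noise samples, plus a random fluctuation satisfying $|\epsilon_M^V|\le E_M^V$. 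The shared bias then cancels exactly when the two estimates are chained, and the $2E_M^V$ arises from the two fluctuation terms alone. Your route reaches the same $2E_M^V$ but under the stronger hypothesis that the total error, bias included, is bounded by $E_M^V$; the paper's decomposition buys you a meaningful bound even when the importance-sampling bias is large, at the price of the shared-bias assumption. A final bookkeeping note: the paper obtains the growth by subtracting $\mathcal{F}_{MC}(S,\Pb,\vx_0,\lambda)$ (the current real-state estimate) in the last line, not the nominal estimate as you do.
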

\begin{proof}
Assume we have the dynamics from \cref{eq:nominaldyn,eq:augmented_real,eq:exptracking}.
% \begin{align*}
% \label{Equation:robustMPPI/ANominal}
% \vx_{t+1}^* &= \vx_t^* + \vF(\vx_t^*, \vu_t + \epsilon_t)\Delta t, \\
% \label{Equation:robustMPPI/AReal}
% \vx_{t+1} &= \vx_t + \vF\left( \vx_t, \vu_t + \epsilon_t + k(\vx_t, \vx^*_t) \right)\Delta t, \\
% \label{Equation:robustMPPI/ABound}
% \| \vx_{t+1} - \vx^*_{t+1} &\| \le \gamma^t \|\vx_0 - \vx_0^* \|, ~~~ 0 < \gamma < 1 .
% \end{align*}
%Consider the free-energy evaluated on trajectories generated by:
%\begin{equation}
%\mathcal{F}(S, \Pb, \vx_0, \lambda) = -\lambda \log\ExP{\Pb}{\exp\left(-\frac{1}{\lambda}S(V, \vx_0) %\right)} , \nonumber
%\end{equation}
%where $S(V, \vx_0)$ is defined in \eqref{eq:pathcost} for the real system's initial condition, $\vx_0$.
Via \ac{IS}, we can sample from \cref{eq:free_energy} with the controlled distribution $\Qb_A$:
\begin{align}
\label{eq:free_energy_with_q}
\mathcal{F}(S, \Pb, \vx_0, \lambda) = - \lambda\log\ExP{\Qb_A}{\exp\left(-\frac{1}{\lambda}S(V, \vx_0) \right) \frac{\rd \Pb}{\rd \Qb_A}}.
\end{align}

\cref{eq:pathcost} can be reformulated to utilize the nominal state as follows:
\begin{equation}
S(V, \vx_0) = \phi(\vx_T^* + e_T) + \sum_{t=0}^{T-1} q(\vx_t^* + e_t), \nonumber 
\end{equation}
where $e_t = \vx_t - \vx_t^*$. Using Lipshitz constants $L_\phi$ and $L_q$ we have:
\begin{align*}
S(V) &= \phi(\vx_T^* + e_T) + \sum_{t=0}^{T-1} q(\vx_t^* + e_t) \\
& \le \phi(\vx_T^*) + \sum_{t=0}^{T-1} q(\vx_t^*) + L_\phi \| e_T \| + \sum_{t=0}^{T-1}L_q \| e_t \|.
\end{align*}
For the rest of the analysis, we will use $ \vtheta = [L_{q},L_{\phi},\gamma]$. From the sum of a geometric series, we can bound the above to get:
\begin{equation*}
S(V, \vx_0) \le S(V, \vx_0^*) + \left(L_\phi \gamma^T + L_q \frac{1 - \gamma^T}{1 - \gamma}\right)\|\vx_0 - \vx_0^*\| .
\end{equation*}
Both the exponential and logarithm functions are monotonically increasing. Thus, utilizing \cref{eq:free_energy_with_q} allows us to upper bound the free energy of the real system:
\begin{align}
\mathcal{F} &\le -\lambda \log\ExP{\Qb_A}{\exp\left(-\frac{1}{\lambda}S(V, \vx_0^*) - \frac{1}{\lambda}   \delta_{1}(0; \vtheta) \right) \frac{\rd \Pb}{\rd \Qb_A}} , \nonumber
\end{align}
%\delta_1(\vx_{0},\vx_{0}^{*};\vtheta) %
where the term $\delta_1$(t; \vtheta) is specified in:
\begin{equation*}
\delta_{1}(t; \vtheta)  =  \left(L_\phi \gamma^T + L_q \frac{1 - \gamma^T}{1 - \gamma}\right)\|\vx_t - \vx_t^*\|. %\label{eq:delta_1}
\end{equation*}
Pull the constant terms outside of the expectation to give:
\begin{align*}
\mathcal{F} & \le  -\lambda \log \ExP{\Qb_A}{\exp\left(-\frac{1}{\lambda}S(V, \vx_0^*) \right) \frac{\rd \Pb}{\rd \Qb_A}}  + \delta_1(0; \vtheta) . 
\end{align*}
This implies the following relationship between the free-energy of the nominal and real system:
\begin{equation}
\mathcal{F}(S, \Pb, \vx_0, \lambda) \le \mathcal{F}(S, \Pb, \vx_0^*, \lambda)  + \delta_{1}(0; \vtheta) .\label{eq:real_nominal_FE}
\end{equation}
The real system free energy is bounded from above by the nominal system free energy plus a term depending on the slope of the cost function, state divergence, and the tracking controller convergence rate.\\
We can only compute estimates of the free energy, in this case, via Monte Carlo sampling. Let us define three terms: $e_M^B, \epsilon_M^{*V}, \epsilon_M^V\in\Rb$. $e_M^B$ represents the bias in the estimate, while $\epsilon_M^V$ and $\epsilon_M^{*V}$ are random variables sampled from the same distribution representing the variation in our estimators.
%, they are random variables sampled from the same distribution. 
We will \emph{assume} that the bias $e_M^B$  is deterministic and constant, and that the variances $\epsilon^V_M$ and $\epsilon_M^{*V}$ are small and can be bounded with a deterministic positive constant $E^V_M\in\Rb$, i.e $|\epsilon_M^{*V}| < E^V_M$ and $|\epsilon_M^{V}| < E^V_M$. We can define the following relationship between the true free energy, and the Monte Carlo estimate of the free energy value for both the nominal and real systems respectively:
\begin{align}
    \mathcal{F}(S, \Pb, \vx_0^*, \lambda) = \mathcal{F}_{MC}(S, \Pb, \vx_0^*, \lambda) + e^B_M + \epsilon^{*V}_M \label{eq:nominal_true_mc_FE}, \\
    \mathcal{F}(S, \Pb, \vx_0, \lambda) = \mathcal{F}_{MC}(S, \Pb, \vx_0, \lambda) + e^B_M + \epsilon^V_M \label{eq:real_true_mc_FE} .
\end{align}
Since we compute the free energy of the nominal state and the real state with the \emph{same noise samples}, the bias of the free energy estimates will be the same; however the actual means will be different since these are two different states. Additionally, the variance of the two estimates will also differ.
If we have a Monte Carlo estimate below the free energy threshold for the current nominal state, $\mathcal{F}_{MC}(S, \Pb, \vx_0^*, \lambda) < \alpha$, we can show the following:
\begin{align} 
\mathcal{F}_{MC}(S, \Pb, \vx_0^*, \lambda) + e^B_M + \epsilon^V_M &= \mathcal{F}(S, \Pb, \vx_0^*, \lambda) ,\nonumber \\
%\remove{&< \alpha + e^B_M + \epsilon^{*V}_M,  \nonumber} \\
\implies \mathcal{F}(S, \Pb, \vx_0^*, \lambda) &< \alpha + e^B_M + E^V_M \label{eq:bounded_nominal_FE}.
\end{align}
We see that above the true free energy is bounded above by the free energy threshold $\alpha$, the bias $e_M^B$ and the variance of the estimate $\epsilon^V_M$.  Substituting \cref{eq:real_true_mc_FE,eq:bounded_nominal_FE} into \cref{eq:real_nominal_FE} provides the following:
\begin{align}
    \mathcal{F}_{MC}(S, \Pb, \vx_0, \lambda) + e^B_M + \epsilon^V_M &< \alpha + e^B_M + E^V_M  + \delta_{1}(0; \vtheta), \nonumber \\
% \end{align}
% \begin{align}
    \mathcal{F}_{MC}(S, \Pb, \vx_0, \lambda) < \alpha + 2E^V_M  &+ \delta_{1}(0; \vtheta) \label{eq:real_MC_FE_bound}.
\end{align}
% This inequality also holds if we replace the true free-energy values by the estimated free-energy values (as long as the same noise profiles are used to generate both estimates), since the expectations can be replaced with summations over a finite number of samples and the derivation will still hold. Therefore, we can write:
% \begin{equation}
% \label{Equation:robustMPPI/FreeEnergyInequality}
% \mathcal{F}_{MC}(S, \Pb, \vx_0, \lambda) \le \mathcal{F}_{MC}(S, \Pb, \vx_0^*, \lambda)  + \delta(\vx_{0},\vx_{0}^{*};\vtheta).
% \end{equation}
%
% for some value of $\epsilon_M^V$. Recalling our assumption that $\epsilon_M^V$ is small, we can bound it by some value $E_M^V$, and we then have that any new estimate must satisfy:
% \begin{equation}
% \mathcal{F}(S, \Pb, \vx_0^*, \lambda) < \alpha + 2E_M^V .
% \end{equation}
% So, the worst that the nominal free-energy will be is $\alpha$ plus twice the expected amount fluctuation in the estimator. We thus have the bound:
% \begin{equation}
% \label{Equation:robustMPPI/FreeEnergyInequalityII}
% \mathcal{F}_{MC}(S, \Pb, \vx_0, \lambda) \le \alpha  + \delta_{1}(\vx_{0},\vx_{0}^{*};\vtheta) + 2E_M^V .
% \end{equation}
This gives an upper bound on the estimate of the free-energy from the current real system state. 
%Since $e^B_M$ is deterministic, we removed it from both sides of the inequality. 
Using $|\epsilon_M^{V}| < E^V_M$, we see that a $2E^V_M$ appears on the right hand side. Now, given the current estimate of the free-energies at the nominal and real states (denoted $\vx_0$ and $\vx_0^*$), we would like to provide a bound for what the free-energy estimate at the next real state will be (denoted as $\vx_1$). Using \cref{eq:real_MC_FE_bound}, we know that:
\begin{align}
\mathcal{F}_{MC}(S, \Pb, \vx_1, \lambda) < \alpha  + \delta_{1}(1; \vtheta) + 2E_M^V.
\end{align}
Inside $\delta_1(1, \vtheta)$, the worst case of propagation implies $\vx_1^*$ equals $\vx_0^*$:
\begin{equation}
\| \vx_1 - \vx_1^* \| \le \| \vx_1 - \vx_0^* \|, \nonumber
\end{equation}
The real system propagates with the disturbances, $\vx_1 = \vF(\vx_0, \vu) + \vw$,
% \begin{equation}
% \vx_1 = \vF(\vx_0, \vu) + \vw , \nonumber
% \end{equation}
where $\vw$ incorporates both the control dependent and any additional state-dependent noise. We assume that there is an upper bound for $\vw \le D$:
\begin{equation}
\| \vx_1 - \vx_0 \| \le \| \vF(\vx_0, \vu) \| + D .\nonumber
\end{equation}
We have that $\| \vx_1 - \vx_0^* \| = \| \vx_1 - \vx_0  - (\vx_0^* - \vx_0) \|$, and using the triangle inequality gives the following:
\begin{align*}
&\| \vx_1 - \vx_0  - (\vx_0^* - \vx_0) \|
\le \| \vx_1 - \vx_0 \| + \|\vx_1^* - \vx_0 \|\\
&= \| \vF(\vx_0, \vu) - \vx_0\| + \|\vx_0^* - \vx_0 \| + D.
\end{align*}
Denoting $\| \vF(\vx_0, \vu) - \vx_0\| + \|\vx_0^* - \vx_0 \|$ as  $D_\vF(\vx_0, \vx_0^*, \vu)$, we then have:
\begin{align}
\mathcal{F}_{MC}(S, \Pb, \vx_1, \lambda) &< \alpha  + 2E_M^V \nonumber \\
&+ \left(L_\phi \gamma^T + L_q \frac{1 - \gamma^T}{1 - \gamma}\right)D_\vF(\vx_0, \vx_0^*, \vu). \nonumber
\end{align}

Lastly, subtract $\mathcal{F}_{MC}(S, \Pb, \vx_0, \lambda)$ from both sides to complete the proof.
% \begin{equation}
% \Delta \mathcal{F}_{MC} < (\alpha -\mathcal{F}_{MC}(S, \Pb, \vx_0, \lambda)) + \delta_{2}(\vx_{0},\vx_{0}^{*},\vu;\vtheta) + 2E_M^V .
% \end{equation}
\end{proof}
%Note that this is a \emph{numerical} result, which are typically difficult to obtain in stochastic optimal control.
The maximum increase in the estimate of the free energy is determined by three terms. Term one involves the user defined free energy threshold, which indicates the ``distance'' from constraint violation. Term two depends on the cost function slope, stiffness of the dynamics, and tracking controller convergence rate. The final term estimates the Monte Carlo sampling error.
The \ac{RMPPI} algorithm can be summarized in \cref{Algorithm:robustMPPI/R-MPPI}. %For implementation on real hardware systems, there are a few practical considerations.

%\textbf{Cost Function Tuning:} The key parameters in tuning this \ac{MPC} algorithm lie in the number of samples, time horizon, temperature and cost threshold. The number of samples dictates the accuracy of the Monte Carlo estimates of each control and is the bottleneck for computation time. Next, the cost temperature governs the scaling of the cost weighted averaging. Finally the cost threshold $\alpha$ governs how conservative the system is with respect to disturbances.

% \begin{wrapfigure}[24]{R}{0.5\textwidth}
\begin{algorithm}[h]
\footnotesize
\SetKwInOut{Input}{Given}
\Input{
    $\vF$, $N$, $\lambda$: System dynamics, samples, temperature\;
    %$\phi$, $q$, $\Sigma$, $T$: Cost function and sampling parameters\;
    %$\lambda$, $\alpha$: Temperature and cost thresholds\;
}
\SetKwInOut{Input}{Input}
\Input{
    $U_{init}$: initial control sequence
      }
\BlankLine
%\tcp{Initialization}
$U \leftarrow U_{init}$\;
$\vx^*, \vx \leftarrow StateEstimator()$\;
%\BlankLine
%\tcp{Main Loop}
\While{alive}
{
$\{K_0, K_1, \dots K_{T-1} \} \leftarrow iLQR(\vx, \vx^*, U)$\;
        $\left\{S^{nom}, S^\text{real}, \mathcal{E}\right\}$
        %$S_1^\text{real}, \dots S_N^\text{real} \},$
        %$\{\mathcal{E}^1, \dots \mathcal{E}^N \}
        $\leftarrow AIS(\vx, \vx^*, U, K)$\;
    $\vu_{opt} \leftarrow U_0 + K_0 (\vx - \vx^*) + \sum_{n=1}^N\frac{\exp\left(-\frac{1}{\lambda}S_n^{real}\right)\epsilon_0^n}{\sum_{n=1}^N\exp\left(-\frac{1}{\lambda}S_n^{real}\right)}$ 
    $U \leftarrow U + \sum_{n=1}^N \frac{\exp\left(-\frac{1}{\lambda}S_n^{nom}\right)\mathcal{E}^n}
    {\sum_{n=1}^N \exp\left(-\frac{1}{\lambda}S_n^{nom}\right)}$ 
    $SendControl(\vu_{opt})$\;
    $\vx \leftarrow StateEstimator()$\;
    $\vx^*,U \leftarrow NominalStatePropagation(\vx, \vx^*, U)$\;
}
\caption{\ac{RMPPI} with iLQG}
\label{Algorithm:robustMPPI/R-MPPI}
\end{algorithm}
% \end{wrapfigure}

%%%%%%%%%%%%%%%%%%%%%%%%%%%%%%%%%%%%%%%%%%%%%%%%%%%%

%%%%%%%%%%%%%%%%%%%%%%%%%%%%%%%%%%%%%%%%%%%%%%%%%%%%

%\subsection{Real Time Performance:} There are a few elements required to ensure real time performance of the controller. The sampling required for estimates of the free energy are done on a GPU as it is a parallelizable computation. The same set of feedback gains is reused in the case of \ac{iLQG} for every sample to save computation; ideally each nominal trajectory would have it's own optimal set of gains. Additionally, with the current form of the \ac{CCM} feedback controllers, it is not possible to implement the optimization within each thread of a GPU. As such, experiments for the CCM feedback are limited to CPU only.
\begin{table*}[hbtp!]
\footnotesize
\centering
\caption{Controller results in Autorally, each controller ran for approximately 15 laps.}
\begin{tabular}{lcccccc}
\toprule
% number of laps, 13, 16, 16, 21, 10, 15
{\textbf{Controller}}  & {\textbf{Mean Lap Time (s)}} & {\textbf{Mean Speed (m/s)}} & {\textbf{Fastest Lap}} & {\textbf{Top / Target Speed}} & {\textbf{Max slip (rad)}} & \textbf{Mean Lap Dist. (m)} \\
\midrule
MPPI Sim   & 30.16 $\pm$ 0.40 & 5.74 $\pm$ 0.92 & 29.50 & 6.83 / 7 & 0.24 & 173.35 $\pm$ 0.42  \\
Tube Sim   & 30.48 $\pm$ 0.37 & 5.65 $\pm$ 0.88 & 30.09 & 6.82 / 7 & 0.24 & 173.01 $\pm$ 0.60 \\
\textbf{RMPPI Sim}  & \textbf{29.05} $\pm$ \textbf{0.96} & \textbf{5.97} $\pm$ \textbf{1.49} & \textbf{27.88} & \textbf{8.84 / 9} & \textbf{0.65} & \textbf{173.51} $\pm$ \textbf{2.20} \\
\hline
MPPI Real  & 30.67 $\pm$ 0.53 & 5.77 $\pm$ 1.63 & 29.93 & 9.84 / 9 & 0.63 & 181.97 $\pm$ 1.13 \\
Tube Real  & 31.21 $\pm$ 0.43 & 5.68 $\pm$ 1.62 & 30.62 & 9.56 / 9 & 0.65 & 183.07 $\pm$ 0.94  \\
\textbf{RMPPI Real} & \textbf{31.07} $\pm$ \textbf{0.33} & \textbf{5.83} $\pm$ \textbf{1.88} & \textbf{30.74} & \textbf{10.31 / 9} & \textbf{0.84} & \textbf{186.28} $\pm$ \textbf{0.85}  \\
\bottomrule
\end{tabular}

% Min Lap Distance Data: & \edit{\textbf{Min Lap Dist. (m)}}
% MPPI Sim: & \edit{172.6812}
% Tube Sim:  & \edit{172.0069}
% RMPPI Sim: & \edit{\textbf{171.7102}}
% MPPI Real:  & \edit{179.3608}
% Tube Real: & \edit{181.6265}
% RMPPI Real: & \edit{\textbf{184.9030}}

\label{tab:simulation_results}
\end{table*}
\raggedbottom

\section{Results}
\label{sec:results}
\noindent \textbf{iLQG vs. CCM Tracking:}
To compare the tracking controllers, we ran RMPPI on both a linear double integrator system and a nonlinear system given in Section VI of \cite{manchester2017control}. 
\remove{For the linear problem, the true system noise variance that is $100$ times that of the nominal system. For the nonlinear problem the true system noise has a variance of $150$ times that of the nominal system.}
For the linear problem, the true noise in the control channel is $100$ times that of what RMPPI expects. For the nonlinear problem, the true noise in the control channel has a variance of $150$ times that of what RMPPI expects.
The goal is to test the robustness of each controller to large, unknown disturbances. 
% The primary use for the double integrator system was to investigate the role of the tracking controller with respect to free energy growth provided by \cref{eq:bound_growth_rfe}. 
% \begin{figure}[htbp!]
%     \centering
%     \subfloat[]{\includegraphics[width=0.25\textwidth]{Figures/DoubleIntegrator/RMPPIRobustCost/Robust Robust Real Free Energy Growth.pdf}\label{fig:RMPPI_DI_LQR}}
%     \subfloat[]{\includegraphics[width=0.25\textwidth]{Figures/DoubleIntegrator/RMPPIRobustCost/Robust CCM Real Free Energy Growth.pdf}\label{fig:RMPPI_DI_CCM}}
%     \caption{RMPPI Double Integrator Free Energy Growth. (a) Utilizing the \ac{iLQG} Tracking Controller. (b) Utilizing the CCM Tracking Controller Note the magnitude of the free energy bound in (a), this is a conservative upper bound. The free energy spikes in (b) are larger due to the rise time of the controller.}
%     \label{fig:LQRvsCCM}
% \end{figure}

\begin{figure}[htbp!]
    \centering
    \includegraphics[width=0.45\textwidth]{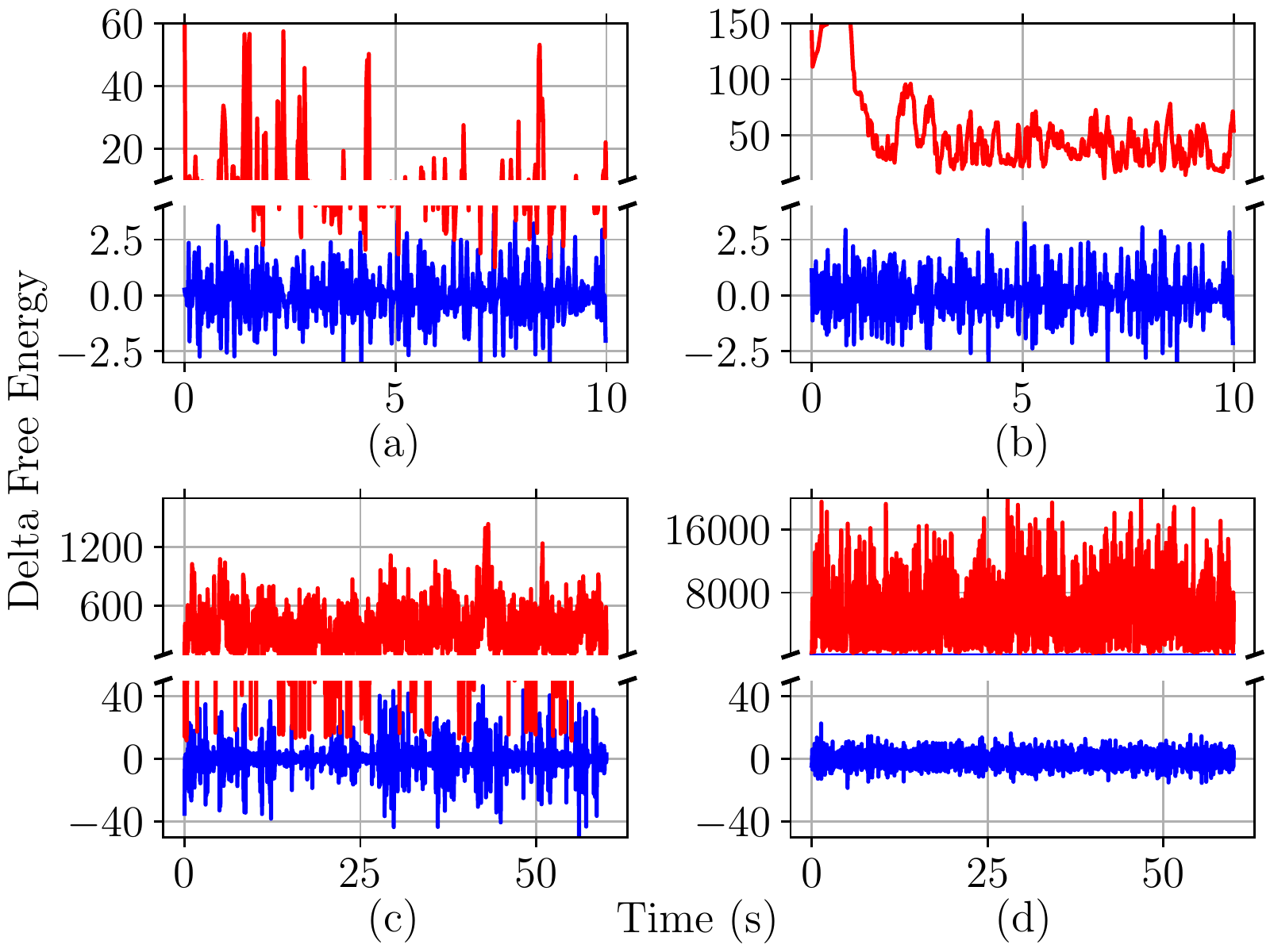}
    %\caption{a) This figure is the free energy growth and bound of the nonlinear dynamics when using RMPPI + CCM. b) This is nonlinear using RMPPI + iLQR. c) This is double integrator with RMPPI +CCM d) this is double integrator with RMPPI + iLQR}
    \caption{\textbf{Blue:} Free Energy Growth in \ac{RMPPI}. \textbf{Red:} Theoretical Bounds of Free Energy Growth for \ac{RMPPI}. (a)~uses \ac{CCM} as the tracking controller on the nonlinear dynamics, (b)~shows the nonlinear dynamics using \ac{iLQG}, (c)~shows the \ac{CCM} controller's performance on the double integrator system, and (d)~uses \ac{iLQG} on the double integrator. In both systems, the free energy growth bounds for the \ac{CCM} controller are tighter than that of the \ac{iLQG} controller due to the guaranteed convergence of the tracking controller in the task domain.}
    \label{fig:LQRvsCCM}
\end{figure}

We can see the free energy growth of the \ac{iLQG} system and the \ac{CCM} system in \cref{fig:LQRvsCCM}. Without a theoretical guarantee for exponential tracking, the worst case free energy growth bound from \ac{iLQG} is not very useful, as the magnitude of the upper bound is far higher than the actual free energy growth of the real system.  Given the fact that the CCM guarantees exponential tracking of the nominal state, we are able to compute \emph{useful free energy bounds}. These bounds provide a notion of how close the system is to the crash constraints. For aggressive maneuvering, it can be desirable to be able to recover from states and control trajectories that are near the free energy threshold. 
\remove{In terms of performance, we see a degradation in the ability to regulate free energy growth with the CCM controller because the controller is attempting to minimize control effort. This results in a slower rise time, and higher spikes in free energy. On the other hand, the \ac{iLQG} controller is tuned aggressively for a short rise time, and is able to handle the spikes in free energy while expending more control effort.} 
In both the linear and nonlinear dynamical systems, the magnitude of the free energy spikes is based on the aggressiveness of the underlying tracking controller. Controllers that are tuned similarly for disturbance rejection can show almost identical free energy handling properties, seen in \cref{fig:LQRvsCCM}a and \cref{fig:LQRvsCCM}b, with differing bounds. A controller tuned to minimize control effort may result in larger free energy spikes, as we see in the double integrator system for the CCM in \cref{fig:LQRvsCCM}c.

\noindent \textbf{AutoRally:} The AutoRally vehicle is an open source 1/5th scale autonomous racing platform \cite{AutoRallyPlatform}. We test our controller using both a physical platform and a simulated analog in Gazebo. All computation is run on board in real time. For the dynamics model of both the actual and simulated AutoRally platform, a feedforward Neural Network is used like in \cite{williams2017model}. 

%There are two main incarnations of the AutoRally cost function, the standard cost function and the robust cost function, both described in detail in \cite{williams2019model}. 
We use a standard and robust cost function described in \cite{williams2019learning}.
The robust cost function has similar terms to the standard cost function but instead of a constant gradient it has constraint-like penalties with high slope. This formulation is piece-wise continuous and retains the Lipschitz continuity assumption on the cost function used in the theoretical bound. The main components of both AutoRally cost functions include a penalty for deviation from the center of the track, a slip penalty where the slip angle is $\arctan(\frac{v_x}{|v_y|})$, and a penalty for deviation from the target speed. We use \ac{iLQG} as the feedback controller on all iterations of MPPI on the AutoRally platform.

% \vspace{-1em}

\subsubsection{AutoRally Simulation Experiments}
\begin{figure}[hbtp!]
    \centering
    \includegraphics[width=0.37\textwidth,clip,trim=20 15 75 35]{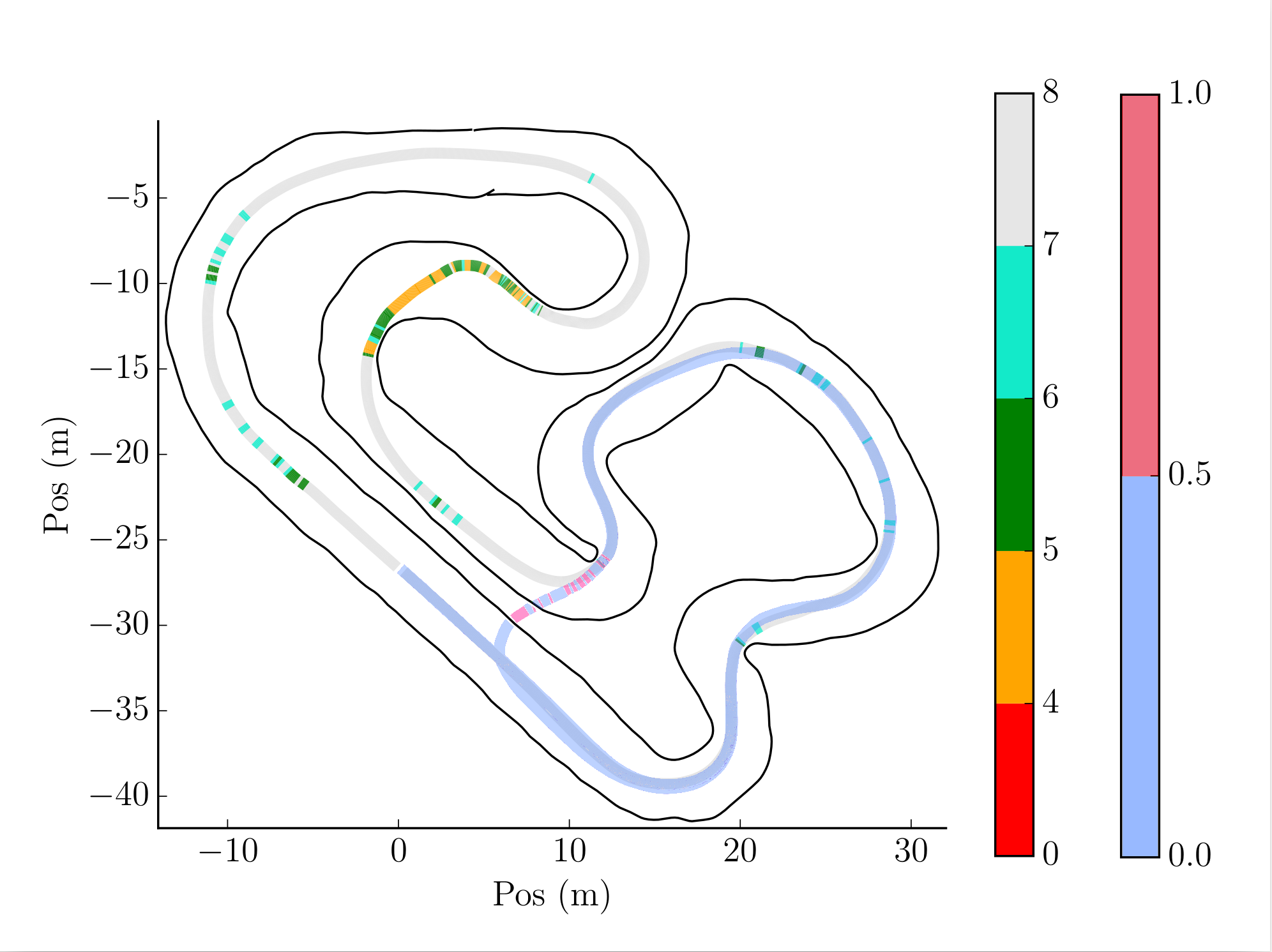}
    % \subfloat[Nominal State Selection RMPPI]{\includegraphics[width=0.37\textwidth,clip,trim=20 15 85 35]{Figures/RobustMPPIAutorallySim/Lap 4 Nominal State Selection track.pdf}}
    %
    % \subfloat[Nominal State Selection Tube MPPI]{\includegraphics[width=0.37\textwidth,clip,trim=20 15 75 35]{Figures/TubeMPPIAutorallySim/Lap 17 Nominal State Selection track.pdf}}
    \caption{Nominal State Selection of \ac{RMPPI} and \ac{Tube-MPPI} in the AutoRally simulator. For \ac{Tube-MPPI}, the nominal state is pink and the real state is light blue. For \ac{RMPPI}, State $0$ is $\vx^*_{t-1}$ or the previous nominal state, State $4$ is $\vx^*_t$ or the propagated nominal state, and State $8$ is $\vx_{t}$ or the real state. The other states are the line search interpolations between them.}
    \label{fig:MPPISim}
\end{figure}
The Neural Network dynamics model poorly represents the actual dynamics of the simulator. %Numerical instabilities and instantaneous shifting from sliding to static friction on the wheels results in consistent large unmodeled disturbances. %The large disturbances provide an interesting view into the underlying structure of the different versions of MPPI.
We ran each controller on the same simulated track with ground truth state and the same robust cost function. Each controller was set to the highest speed it could handle without consistently rolling over. The target speed of \ac{MPPI} and \ac{Tube-MPPI} were 7 m/s while \ac{RMPPI} was set to 9 m/s. The maximum slip angle for all controllers was set to $0.5$ rad. 
%\ac{RMPPI} is better able to handle disturbances than \ac{MPPI} and \ac{Tube-MPPI} since it has a higher average speed. This is beyond the threshold where the instability of the simulator becomes an issue.

\cref{fig:MPPISim} shows the impact of the coupled nominal and real system in \ac{RMPPI} compared to \ac{Tube-MPPI}. \ac{Tube-MPPI} fails because as the real system begins to slide off of the track, the nominal system rejects the disturbance and continues to drive straight. At this set of pink points seen on the graph, the real system is sampling around a trajectory optimized for a different region of the state space and is entirely reliant on the external-to-MPPI application of \remove{the} feedback in order to correct back to the track. The optimization from the real state will not recover, while in the case of \ac{RMPPI}, the real system has feedback applied in the sampling and can sample around a trajectory that corrects for the disturbance. Since it is sampling the actual closed loop system, it intelligently corrects back to the track, rather than jumping the track boundary. This is seen with the yellow and green state selections in \cref{fig:MPPISim}. MPPI performs similarly to \ac{Tube-MPPI} but with slightly more erratic controls as it crosses the boundary. The reason that \ac{Tube-MPPI} is unable to reject the disturbance is due to the real system trajectory achieving a low enough cost by crossing the boundary and getting onto a different segment of the track which in turn causes the nominal state to be reset to that real state. The feedback controller does not steer the real state back to the nominal sufficiently fast enough to prevent this. 

An important note on the selection of an nominal state is that on average the real state is chosen, making \ac{RMPPI} and \ac{Tube-MPPI} behave similarly to MPPI. Only when there is a large disturbance is a different nominal state chosen and the additional benefits of \ac{Tube-MPPI} and \ac{RMPPI} realized. Notice that different nominal states are chosen along curves, when the vehicle is likely to encounter a disturbance.
\ac{RMPPI} never chooses a nominal state in the 0-3 range (line search from $x^*_0$ to $x^*_1$), which would indicate a loss of recursive feasibility. 
%If we were to continuously select a state between $x^*_0$ and $x^*_1$ our nominal state would be frozen in time. Eventually this could lead to a divergence in the real and nominal state that is unrecoverable. This behavior was never seen in any experiments conducted. 

\subsubsection{AutoRally Hardware Experiments}

%In total, we drove 100 laps throughout the course of tuning cost functions and implementing controllers on the track, which represents a total of approximately 10 miles of autonomous driving. 
%The physical track was in poor condition due to overgrowth of vegetation, these additional factors impact the controller with increased state disturbances and model error. 
For state estimation, we are using a particle filter as described in \cite{Drews2019RAL}. 
%The particle filter uses the IMU for state propagation and uses the on board camera and GPS for sensor updates. 
\ac{Tube-MPPI} and \ac{MPPI} were both run with the standard cost function, and \ac{RMPPI} is using the robust cost function. The desired speed for all three controllers was set to 9 m/s, as this is the highest speed where \ac{MPPI} was able to reliably navigate the course. The maximum slip angle was set to 0.9 rad for all experiments.

%\bogdan{Looking at the results in \cref{tab:simulation_results}, RMPPI had a higher top speed and max slip angle than the other two controllers. This is due to the cost function having a cost of $0$ for slip angles below $0.75\, \text{rad}$ and a linear cost for slip angles larger than $0.75$ in \ac{RMPPI}, while \ac{MPPI} and \ac{Tube-MPPI} have a quadratic penalty for slip angle centered at $0\,\text{rad}$}
%\bogdan{No reference to table 1 anymore}
\ac{RMPPI} has a lower cost for slip angle that results in more aggressive behavior around turns. However, even though \ac{RMPPI} had the capability to perform more aggressive maneuvers at lower costs than \ac{MPPI} and \ac{Tube-MPPI}, we do not see this result in faster lap times on the hardware. \ac{RMPPI} takes a longer racing line that results in a slower lap time even with a higher speed as shown in \cref{tab:simulation_results}. %\textcolor{red}{Through the more aggressive driving, the controller ends up taking a longer racing line than the other controllers. The increased speed does not make up for the additional distance, therefore we have slightly slower lap times.} 
Overall, we see that \ac{RMPPI} achieves more aggressive behavior without violating the theoretical bound on a real system.

As seen in \cref{fig:MPPIRobustRealFreeEnergy}, the estimated free energy of the nominal state is typically less than that from the actual state, unless there is a large sampling bias \remove{\cref{fig:MPPIRobustRealFreeEnergy}}. The large spikes in free energy of the actual system when running \ac{RMPPI} represent possible failure cases of \ac{Tube-MPPI} and \ac{MPPI}. A disturbance could push the sampling of \ac{MPPI} into a region of high cost or could result in a state divergence that is not recoverable in the case of \ac{Tube-MPPI}. Both of these were observed during tuning of the controllers. Theoretically, unless \ac{Tube-MPPI} or \ac{RMPPI} chooses a different nominal state to use, we are effectively running \ac{MPPI}. 
%You can see this validated in
This is validated in \cref{fig:MPPIVanillaTubeRealFreeEnergy} where most of the time, \ac{MPPI} has similar free energy dynamics to \ac{Tube-MPPI}. %the other\edit{s.}% controllers. 
\begin{figure}[htbp!]
    \centering
    \subfloat[]{\includegraphics[width=0.25\textwidth, clip,trim=0 30 0 0]{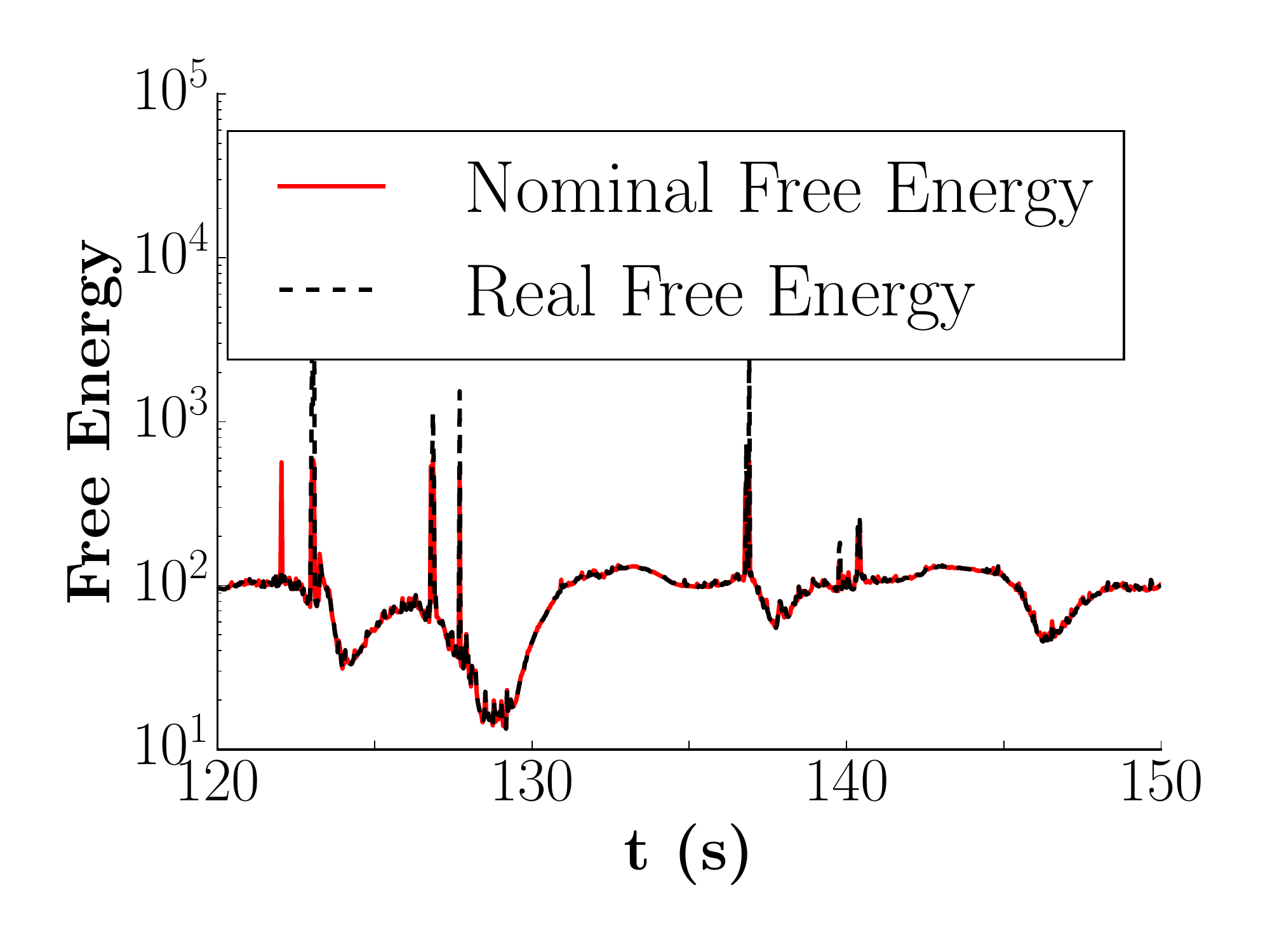}\label{fig:MPPIRobustRealFreeEnergy}}
    \subfloat[]{\includegraphics[width=0.25\textwidth, clip,trim=0 30 0 0]{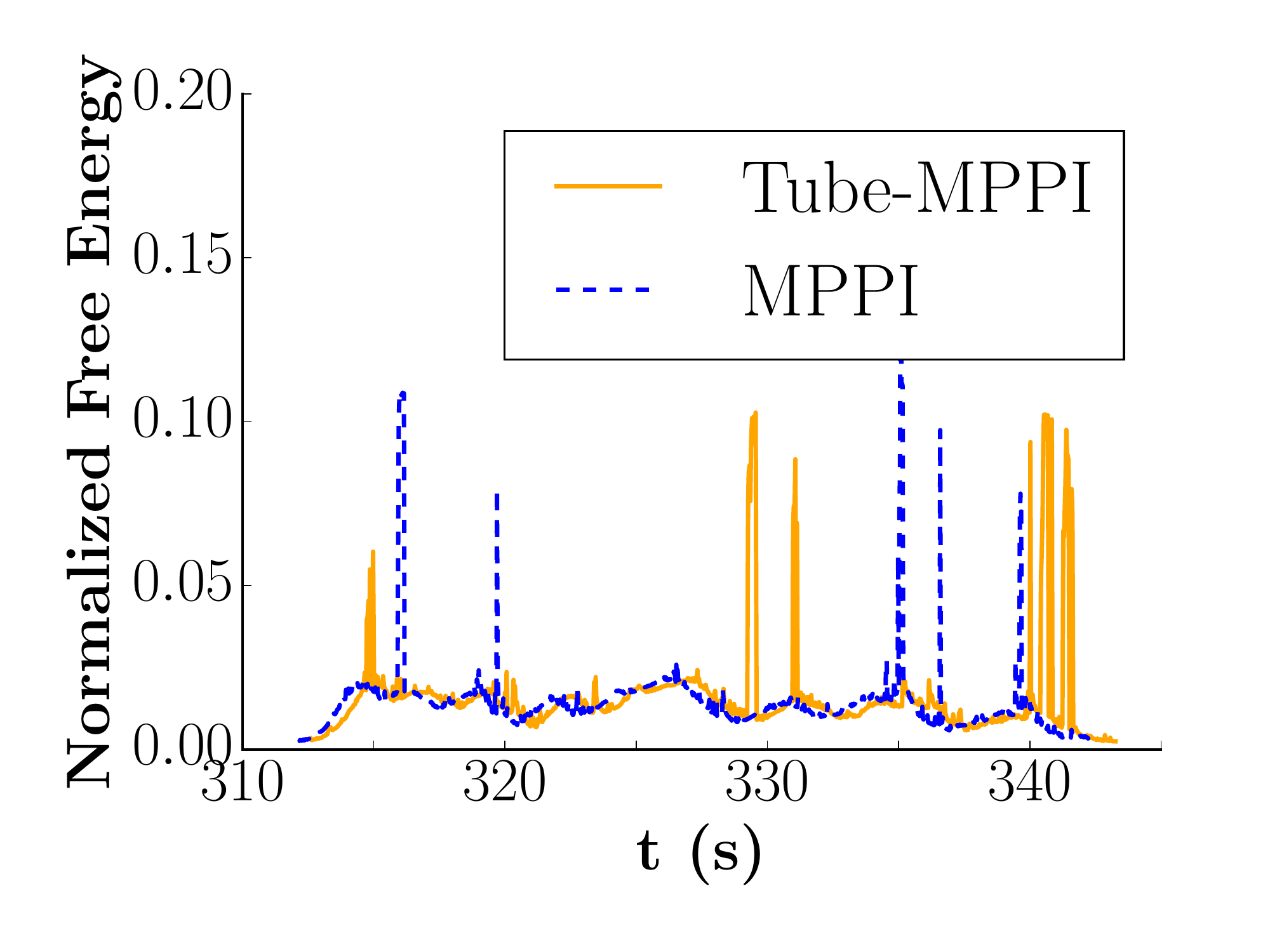}\label{fig:MPPIVanillaTubeRealFreeEnergy}}
    
    % \caption{AutoRally Hardware Free Energy Comparison}
    \caption{(a) Comparison of AutoRally Hardware Free Energy between RMPPI's real and nominal systems. The real free energy, on average, spikes higher than the nominal free energy since the nominal free energy is upper bounded by $\alpha$. (b) MPPI and Tube-MPPI real systems free energy spike similarly. In practice, we see that on the AutoRally platform, MPPI and Tube-MPPI perform similarly with the same cost function.}% The free energy of the RMPPI system shows that the nominal system can not spike as high as the real system's free energy can due to nominal state propagation being limited to less than $\alpha$.} %The free energy of MPPI spikes more often than that of Tube-MPPI when provided the same cost function. }
    \label{fig:MPPIRealFreeEnergy}
\end{figure}

%%%%%%%%%%%%%%%%%%%%%%%%%%%%%%%%%%%%%%%%%%%%%%%%%%%%
\section{Conclusions}
\label{sec:conclusion}

% \bogdan{Reviewer said we needed to discuss how RMPPI helps in practice and I believe the previous statement addresses that concern.} 

In this work, we propose a novel decision making architecture for \acl{RMPPI}. \ac{RMPPI} is characterized by its importance sampler and nominal state propagation that allows the system to gracefully recover from unknown disturbances, while simultaneously allowing the system to be controlled at its dynamic limits. We have validated the free energy bounds of RMPPI in both simulation and hardware, extended the underlying tracking controller to include recent developments in nonlinear contraction theory, and created a side-by-side comparison between RMPPI and the other forms of \acl{MPPI}. RMPPI has been shown to better control free energy and better handle large, unknown disturbances in the dynamics.
%%%%%%%%%%%%%%%%%%%%%%%%%%%%%%%%%%%%%%%%%%%%%%%%%%%%

%===============================================================================
% \section{Citations}
% \label{sec:citations}

% 	Citations can be made using either \textbackslash cite\{\} or \textbackslash citet\{\}, depending from the appropriateness. To avoid the citation moving to the next line, it is often a good practice to replace the space before with a tilde (\~{}) character.
% 	Example 1: ``CoRL is the best conference ever~\cite{Calandra2016}.''
% 	Example 2: ``\citet{Calandra2016} proved, both theoretically and numerically, that CoRL is the best conference ever.''

%===============================================================================

% The maximum paper length is 8 pages excluding references and acknowledgements, and 10 pages including references and acknowledgements

%\clearpage
% The acknowledgments are automatically included only in the final version of the paper.
% \section*{Acknowledgements}
% This  work  is supported by 
%===============================================================================

% no \bibliographystyle is required, since the corl style is automatically used.
\bibliographystyle{IEEEtran}
\bibliography{references.bib}  % .bib

\end{document}